\declaretheorem[name=Theorem]{thm}
\theoremstyle{definition}
\newtheorem{defn}{\protect\definitionname}
\theoremstyle{plain}
\newtheorem{lem}{\protect\lemmaname}
\theoremstyle{plain}
\theoremstyle{plain}
\DeclareMathOperator{\Tr}{Tr}
\DeclareMathOperator{\Prob}{Pr}
\DeclareMathOperator{\I}{I}
\providecommand{\corollaryname}{Corollary}
\providecommand{\definitionname}{Definition}
\providecommand{\lemmaname}{Lemma}
\providecommand{\propositionname}{Proposition}
\definecolor{darkgreen}{rgb}{0,0.5,0}
\newcommand{\C}{\mathbb{C}}
\newcommand{\A}{\mathbf{a}}
\newcommand{\X}{\mathbf{x}}
\newcommand{\cH}{\mathcal{H}}
\newcommand{\cS}{\mathcal{S}}
\newcommand{\cR}{\mathcal{R}}
\newcommand{\PH}{\mathcal{P}(\mathcal{H})}
\renewcommand{\A}{\mathbf{a}}
\renewcommand{\X}{\mathbf{x}}
\DeclarePairedDelimiter\abs{\lvert}{\rvert}%
\DeclarePairedDelimiter\norm{\lVert}{\rVert}%
\let\oldabs\abs
\def\abs{\@ifstar{\oldabs}{\oldabs*}}
\let\oldnorm\norm
\def\norm{\@ifstar{\oldnorm}{\oldnorm*}}
\begin{document}

\title{Contextuality in composite systems:\newline the role of entanglement in the Kochen--Specker theorem}
\author{Victoria J~Wright}
\email{victoria.wright@icfo.eu}
\affiliation{ICFO-Institut de Ciencies Fotoniques, The Barcelona Institute of Science and Technology, 08860 Castelldefels, Spain}

\author{Ravi Kunjwal}
\email{ravi.kunjwal@ulb.be}
\affiliation{Centre for Quantum Information and Communication, Ecole polytechnique de Bruxelles,
	CP 165, Universit\'e libre de Bruxelles, 1050 Brussels, Belgium}

\begin{abstract}
The Kochen--Specker (KS) theorem reveals the nonclassicality of single quantum systems. In contrast, Bell's theorem and entanglement concern the nonclassicality of composite quantum systems. Accordingly, unlike incompatibility, entanglement and Bell non-locality are not necessary to demonstrate KS-contextuality. However, here we find that for multiqubit systems, entanglement and non-locality are \emph{both} essential to proofs of the Kochen--Specker theorem. Firstly, we show that unentangled measurements (a strict superset of local measurements) can never yield a logical (state-independent) proof of the KS theorem for multiqubit systems. In particular, unentangled but nonlocal measurements---whose eigenstates exhibit ``nonlocality without entanglement"---are insufficient for such proofs.
This also implies that proving Gleason's theorem on a multiqubit system necessarily requires entangled projections, as shown by Wallach \href{http://dx.doi.org/10.1090/conm/305/05226}{[Contemp Math, 305: 291-298 (2002)]}. 
 Secondly, we show that a multiqubit state admits a statistical (state-dependent) proof of the KS theorem if and only if it can violate a Bell inequality with projective measurements. 
We also establish the relationship between entanglement and the theorems of Kochen--Specker and Gleason more generally in multiqudit systems by constructing new examples of KS sets. Finally, we discuss how  our results shed new light on the role of multiqubit contextuality as a resource within the paradigm of quantum computation with state injection.
\end{abstract}

\maketitle


\section{Introduction}
Quantum theory's `departure from classical lines of thought' \cite{Schrodinger35} is today a driving force behind the promise of quantum technologies. Quantum theory abandons assumptions implicit in classical physical theories, such as the assumptions that physics must be fundamentally deterministic or that all observables are jointly measurable. This allows for an array of typically quantum phenomena such as entanglement, uncertainty relations, Bell nonlocality, contextuality, etc.~We generically refer to the possibility of these previously forbidden properties as the \emph{nonclassicality} of quantum theory. Such nonclassical properties of quantum theory are key to the many advantages quantum information processing and quantum computation hold over their classical counterparts. However, the exact  relationship between different notions of nonclassicality and such advantages is often unclear and remains an active area of research \cite{LP01, DG07, VFG12,HWV14,CHT20, CLM10, YK20, FBL21}.

The case of multiqubit systems is of particular importance given their ubiquity throughout quantum technologies, particularly noisy intermediate-scale quantum (NISQ) technologies \cite{Preskill18, AAB19}. However, the nonclassicality of qubit systems remains an anomalous case. Individually, for example, a qubit cannot display Kochen--Specker (KS) contextuality \cite{KS67} while, collectively, multiqubit systems (which exhibit KS-contextuality) derail the neat narrative of such contextuality powering quantum computational advantage \cite{HWV14,BDB17}. Furthermore, entanglement is often considered a key indicator of nonclassicality in these systems but the sense in which it relates to fundamental notions of nonclassicality witnessed by the theorems of Bell \cite{Bell64, Bell66}, Kochen--Specker \cite{KS67}, and Gleason \cite{Gleason57} needs clarification. Since the question of nonclassicality is essentially a foundational one \cite{Spekkens16}, we approach the study of multiqubit systems through this lens.

Entanglement is an intrinsically compositional property and is, therefore, only relevant to the study of nonclassicality in composite (i.e. multipartite) systems. Schr\"{o}dinger claimed the entanglement of quantum states to be `the characteristic trait of quantum mechanics, the one that enforces its entire departure from classical lines of thought' \cite{Schrodinger35}. Bell's theorem \cite{Bell64, Bell66} exemplifies this point by revealing the nonclassicality of correlations arising from local measurements on composite quantum systems in entangled states, the simplest case being a two-qubit system. 

The Kochen--Specker theorem \cite{KS67}, on the other hand, can reveal the nonclassicality of correlations between measurements implemented on an \emph{indivisible} quantum system, the simplest case being a qutrit. For quantum systems of dimension at least three, the theorem states that there cannot exist an underlying ontological model---known as a  \emph{KS-noncontextual} ontological model---that reproduces the predictions of quantum theory. In such a model the outcomes of projective measurements are fully determined by the ontic state of the system. Furthermore, the outcomes are independent of \emph{context}\footnote{The context  of a projective measurement here refers to other projective measurements with which it is jointly performed.} and respect the functional relationships between commuting measurements.\footnote{For example, the outcome of measuring an observable with operator $A^2$ should be the square of the outcome of measuring $A$.}  

The key insight of Kochen and Specker was that one can witness the impossibility of such models with a {\em finite} set of rank-$1$ projections  (a \emph{KS set}) on a three-dimensional Hilbert space. A KS set thus constitutes a {\em logical} proof of the KS theorem \cite{KS15}, exemplified by the original result of Kochen and Specker \cite{KS67}. It is also possible to demonstrate the inadequacy (rather than impossibility) of KS-noncontextual ontological models in a (weaker) {\em statistical} sense. Such statistical proofs of the KS theorem  \cite{KS18} are exemplified by the proof due to Klyachko {\em et al.}~\cite{KCBS}.

Prior to the Bell and Kochen--Specker theorems, Gleason's theorem \cite{Gleason57} demonstrated that, for any quantum system of dimension at least three, the unique way to assign probabilities to the outcomes of projective measurements is via the Born rule.  In particular, Gleason's theorem excludes any deterministic probability rule  given by a $\{0,1\}$-valued assignment of probabilities to all the self-adjoint projections on the system's Hilbert space.~This exclusion thus implies the KS theorem, but, unlike the proof of Kochen and Specker \cite{KS67}, it requires an uncountably infinite KS set.

A single qubit does not support any of the three theorems  (Bell, KS, or Gleason) and is, by that token, rather ``classical''.\footnote{This assumes a restriction to pure states and projective measurements on a qubit, as is traditionally the case in these theorems. Such a `pure' qubit can exhibit nonclassicality in other (weaker) respects like the existence of incompatible measurements, something reproducible in classical theories with an epistemic restriction \cite{Spekkens16}. However, once mixed states and generalised measurements are included, a single qubit can support proofs of generalised contextuality \cite{Spekkens05, KG14, Kunjwal16}. Allowing generalised measurements also permits Gleason-type theorems for the qubit case \cite{Busch03, CFM04, WW19}.} A single qutrit can support the Gleason and KS theorems but not Bell's theorem. Hence, the smallest quantum system on which one can meaningfully study the  interplay of Gleason, Bell, and KS theorems is a two-qubit system. The nonclassicality witnessed by Bell's theorem in qubit pairs (and more generally) clearly depends on entanglement, since Bell inequality violations can only be observed when the quantum systems used are described by an entangled state. Is entanglement, however, also necessary for the theorems of KS and Gleason in a two-qubit system, and more generally, in multiqubit systems? Since both theorems appeal to the structure of quantum  measurements, we need to go beyond states and consider the role of entanglement in measurements as well.

The question of entanglement and Gleason's theorem is already resolved. A result by Wallach \cite{Wallach02} showed that the set of unentangled multiqudit projections yields a proof of Gleason's theorem if and only if each qudit has a Hilbert space dimension three or more. In particular, the set of unentangled multiqubit projections cannot yield Gleason's theorem.

This work will largely consider rank-one projective measurements that are \emph{unentangled}, i.e. measurements in bases comprising only product vectors.  Unlike quantum states, measurements can be ``nonlocal" without being entangled, that is, there exist measurements which cannot be implemented via local operations and classical communication (LOCC) but nevertheless only involve projections onto unentangled subspaces.\footnote{The  eigenstates for such a measurement exhibit a phenomenon called `nonlocality without entanglement', i.e.~they form a set of mutually orthogonal states that cannot be distinguished perfectly via LOCC \cite{BDF99}.} For example, one cannot perform a measurement in the unentangled three qubit basis 
\begin{equation}\label{eq:nlbasis}
\begin{aligned}
\{&\ket{000},\ket{+10},\ket{0+1},\ket{10+},\\&\ket{111},\ket{-10},\ket{0-1},\ket{10-}\}\,,
\end{aligned}
\end{equation}
via LOCC. It could in principle be that ``nonlocality" of unentangled measurements is sufficient to recover a logical proof of the KS theorem \cite{KS67, KS15}. However, in the first main result of this work, Theorem \ref{thm:main}, we show that this is not the case: the unentangled rays of a multiqubit system are KS-colourable and, thus, no unentangled form of nonclassicality suffices for a logical proof of KS-contextuality in this setting.

\begin{figure}
	\centering
	\begin{tabular}{|c|c|c|}
		\hline
		$X\otimes I$ & $I\otimes X$ & $X\otimes X$\\
		\hline
		$I\otimes Y$ & $Y\otimes I$ & $Y\otimes Y$\\
		\hline
		$X\otimes Y$ & $Y\otimes X$ & $Z\otimes Z$\\
		\hline
	\end{tabular}\vspace{0.3cm}
	\includegraphics[scale=0.3]{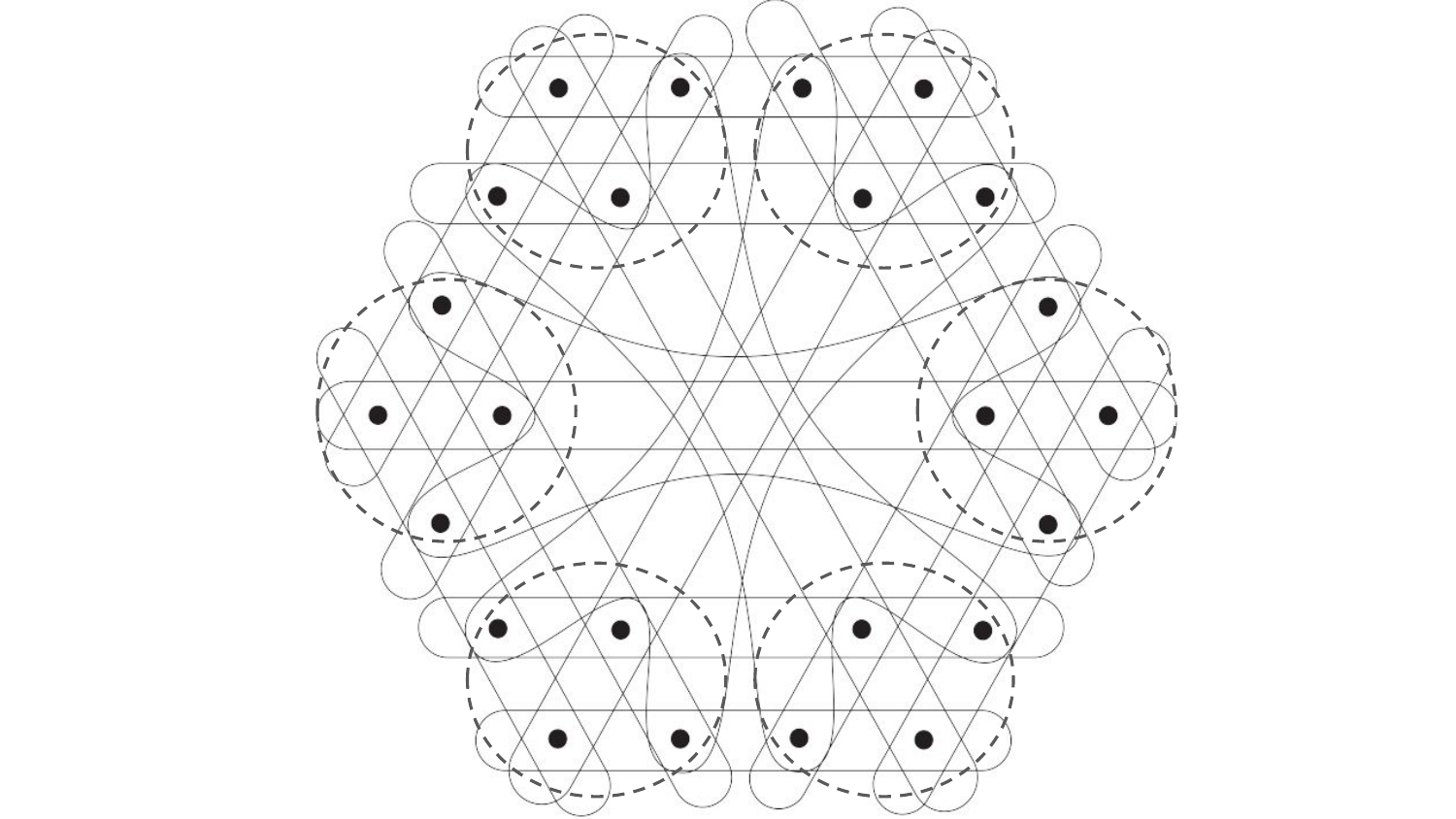}
	\caption{The Peres-Mermin square \cite{Mermin93} consisting of two-qubit Pauli matrices together with the contextuality scenario defined by the orthogonality relations between Peres's 24 rays \cite{Peres91}. Each row or column of the Peres-Mermin square is associated to a two-qubit orthonormal basis in which all the operators in that row or column are diagonal. In the contextuality scenario, the six dashed hyperedges denote the six orthonormal bases corresponding to the rows and columns of the Peres-Mermin square. In particular, the basis that diagonalises the third column, $\{XX,YY,ZZ\}$, is the Bell basis.}
	\label{peresmermin}
\end{figure}

The well-known proof of the KS theorem via the Peres-Mermin square \cite{Peres90, Peres91, Mermin93} at first appears to not involve entanglement (see Fig.~\ref{peresmermin}). However, the joint measurement of the Pauli observables $\{XX,YY,ZZ\}$, for example, necessitates a measurement in the Bell basis. In this sense, our result shows that this entanglement is not accidental but, rather, unavoidable; any logical (hence, state-independent) \cite{KS15} proof of  the KS theorem for multiqubit systems necessarily requires entangled measurements. 

We also construct a KS-noncontextual ontological model for the fragment of multiqubit quantum theory containing unentangled measurements and product states. The model can be viewed as a generalization of the single-qubit model of Kochen and Specker \cite{KS67}, but the proof of its validity relies on our result---Lemma \ref{lem:north}---to show that the ontic states are indeed valid KS-noncontextual ontic states. The model admits a simple extension to the case of separable states which renders it preparation contextual \cite{Spekkens05}, but still KS-noncontextual. 

The existence of this model implies that for a proof of the KS theorem on a multiqubit system, one requires either (i) entanglement in the measurements, in which case one can provide a logical (and {\em state-independent}) proof, or (ii) entanglement in the state, in which case the violation of Bell inequalities, for example, provides a \emph{state-dependent} proof  without any entangled measurements \cite{AFL15}. In the second main result of this paper, we demonstrate that such Bell inequality violations are the \emph{only} way to prove the KS theorem in this setting. We show that an entangled state enables a (finite) statistical proof of the KS theorem with unentangled measurements if and only if it also violates a Bell inequality with local projective measurements.

Thus, we must conclude that, just as in the case of Bell's and Gleason's theorems, the nonclassicality of multiqubit systems is underpinned by entanglement in the case of the Kochen--Specker theorem. This discovery is in surprising contrast to the usual intuition that takes the KS theorem as witnessing nonclassicality that is independent of entanglement because it applies, for example, to a single qutrit \cite{KS67}. Hence, in the simplest case where {\em all} three theorems apply, i.e.~a two-qubit system, entanglement is necessary for all of them.

Exploring the relationship between the KS theorem and entanglement in multiqudit systems further, we provide two new constructions of KS sets on multiqudit systems that allow us to obtain an overall picture of this relationship (Fig.~\ref{unentksgleason}). As displayed in Fig.~\ref{unentksgleason}, our results recover the result of Wallach \cite{Wallach02} in the multiqubit case as a corollary, namely, that entangled projections are necessary to obtain Gleason's theorem.
\begin{figure}
	\includegraphics[width=\columnwidth]{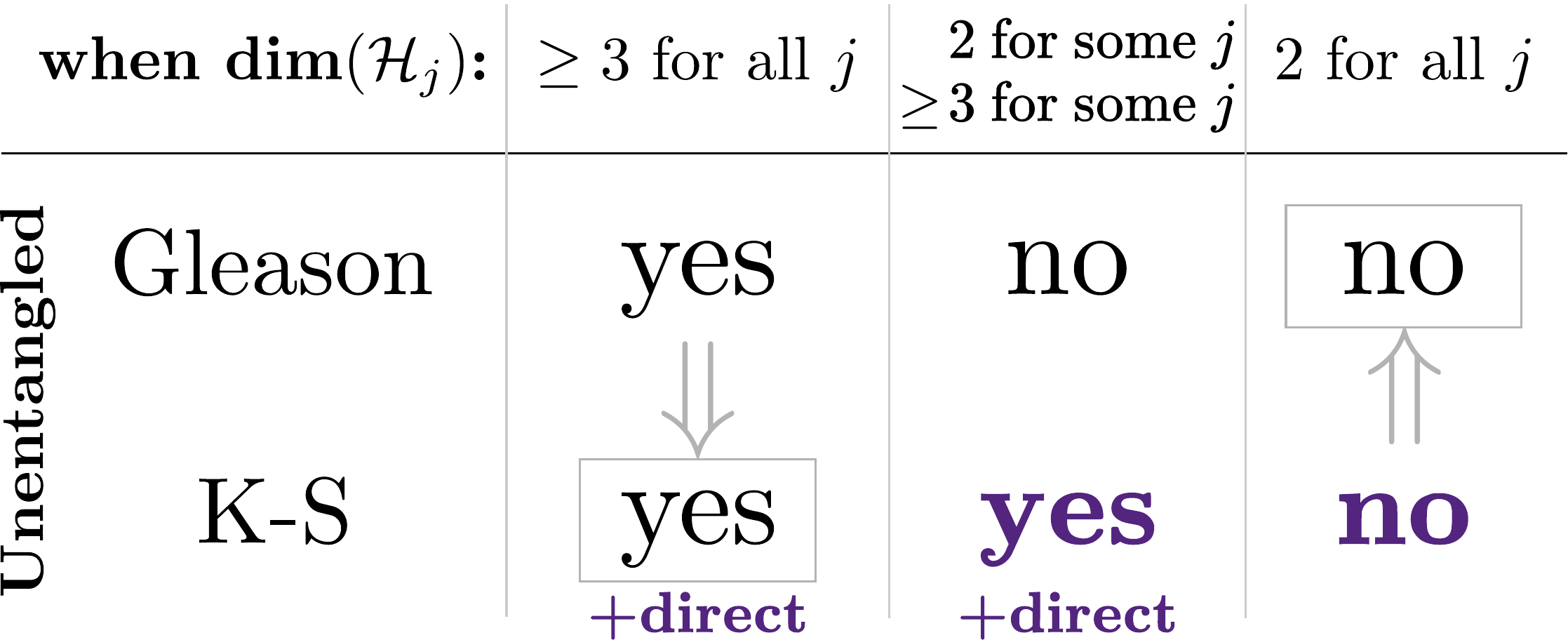}
	\caption{The existence of logical proofs of the KS theorem and proofs of Gleason's theorem without entanglement on a Hilbert space $\cH_1\otimes\cdots\otimes\cH_n$, where $1\leq j\leq n$. The implication arrows show which results follow from each other. The addition of +direct signifies the result also holds for the subset of unentangled measurements given by direct products bases, see Sec.~\ref{sec:unentks}. The results in bold and purple are introduced in the present work.}\label{unentksgleason}
\end{figure}

Finally, we discuss implications of our results for the role of contextuality in multiqubit quantum computation with state injection. In particular, we highlight how the choice of measurements in the schemes of Ref.~\cite{BDB17} appear natural in view of our results.

The structure of the paper is as follows. In Sec.~\ref{sec:prelims}, we provide preliminary notions that we will need in the rest of the paper. In Sec.~\ref{sec:mainresult}, we prove our first main result, i.e.~the necessity of entanglement for multiqubit KS sets. Sec.~\ref{sec:model} presents a KS-noncontextual ontological model for product states and unentangled measurements of multiqubit systems. Sec.~\ref{sec:bellks} establishes our second main result: an entangled state yields a statistical proof of the KS theorem if and only if it yields a proof of Bell's theorem. In Sec.~\ref{sec:fullyent}, we construct a two-qubit KS set without any fully entangled bases. Sec.~\ref{sec:unentks} proves the existence of unentangled KS sets on any multiqudit system that contains at least one qudit of Hilbert space dimension three or more. Sec.~\ref{sec:qcomp} discusses implications of our results for the role of contextuality in multiqubit schemes for quantum computation with state injection (QCSI). Sec.~\ref{sec:disc} concludes with a discussion of our results.

\section{Preliminaries}\label{sec:prelims}

Given a separable Hilbert space, $\mathcal{H}$, a self-adjoint projection on $\mathcal{H}$ is a linear operator $\Pi$ satisfying $\Pi^2=\Pi^\dagger=\Pi$, where $\Pi^\dagger$ denotes the adjoint of $\Pi$. We will denote the set of self-adjoint projections on $\mathcal{H}$ by $\mathcal{P}(\mathcal{H})$. In this work, we will consider \emph{projective measurements} on quantum systems, which we describe by sets $\left\{\Pi_1,\Pi_2,\ldots\right\}$ of self-adjoint, mutually orthogonal projections $\Pi_j\in\PH$ on a separable Hilbert space, $\mathcal{H}$, that sum to the identity operator. 

Rank-one projections are generally sufficient for our purposes so we equivalently consider rays in the projective Hilbert space $\mathcal{R}(\mathcal{H})$. The projective Hilbert space $\mathcal{R}(\mathcal{H})$ is given by the set of equivalence classes, or \emph{rays}, of non-zero vectors in $\mathcal{H}$, under the equivalence relation $\psi\sim\chi$ if and only if $\psi=\alpha\chi$ for some non-zero $\alpha\in\mathbb{C}$. We represent each ray with one of its unit vectors, which we denote by a ``ket'', such as $\ket{\psi}$. For brevity, we will often refer to this vector as the ray it represents. 

There is a bijection between the sets of rays and rank-one projections on a Hilbert space. A rank-one projective measurement is therefore uniquely defined by a \emph{complete set} of rays, i.e.~a set of $d$ mutually orthogonal rays, where $d$ is the (possibly infinite) dimension of the Hilbert space. A complete set is represented by an orthonormal basis $\{\ket{\psi_1},\ket{\psi_2},\ldots\}$ of $\cH$, and thus we say we are performing a measurement \emph{in a basis} and often refer to a complete set of rays as simply a basis.

The KS theorem 
states the impossibility of an outcome-deterministic and measurement noncontextual ontological model  \cite{Spekkens05} (briefly, a {\em KS-noncontextual} ontological model) for quantum theory. We will refer to this fact as the KS-contextuality of quantum theory. The original result due to Kochen and Specker \cite{KS67} provides a \emph{logical} proof of the KS theorem \cite{KS15} and we state it below in a formulation most relevant to the present work:
\begin{thm}[Kochen--Specker \cite{KS67}]\label{thm:KS}
Given a separable Hilbert space $\mathcal{H}$ of dimension at least three, there does not exist a map
\begin{equation}
c:\mathcal{R}\rightarrow\{0,1\}
\end{equation}
such that for any complete set of rays $\{\ket{\psi_1},\ket{\psi_2},\ldots\}$, we have $c(\ket{\psi_j})=1$ for exactly one value of $j\in\{1,2,\ldots\}$.
\end{thm}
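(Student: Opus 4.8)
The plan is to prove the theorem in two steps: first reduce the general separable case to a single three-dimensional Hilbert space, and then rule out the forbidden map on that space by exhibiting a finite ``KS set''. For the reduction, suppose for contradiction that a map $c$ as in the statement exists for some separable $\mathcal{H}$ with $\dim\mathcal{H}\ge 3$. First I would record the \emph{propagation rule}: if $c(\ket{v})=1$ and $\ket{w}\perp\ket{v}$ then $c(\ket{w})=0$, since $\{\ket{v},\ket{w}\}$ extends to an orthonormal basis of $\mathcal{H}$ in which exactly one ray carries the value $1$, necessarily $\ket{v}$. Now fix a three-dimensional subspace $V\subseteq\mathcal{H}$ and an orthonormal basis $\{\ket{\phi_{k}}\}$ of $V^{\perp}$. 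If some ray $\ket{v_{0}}$ in $V$ has $c(\ket{v_{0}})=1$, then for any orthonormal basis $\{\ket{a},\ket{b},\ket{c}\}$ of $V$ the union $\{\ket{a},\ket{b},\ket{c}\}\cup\{\ket{\phi_{k}}\}$ is an orthonormal basis of $\mathcal{H}$; hence at most one of $\ket{a},\ket{b},\ket{c}$ carries a $1$, and if all three carried $0$ the unique $1$ of this basis would be some $\ket{\phi_{k_{0}}}\in V^{\perp}$, forcing $c(\ket{v_{0}})=0$ by the propagation rule, a contradiction. So exactly one of $\ket{a},\ket{b},\ket{c}$ carries a $1$, i.e.\ $c$ restricted to $\mathcal{R}(V)$ is again a forbidden-type map, now on the three-dimensional space $V$. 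If instead no three-dimensional subspace contains a ray valued $1$, then---every ray lying in some three-dimensional subspace---$c\equiv 0$, contradicting that every orthonormal basis (which has at least three rays) contains a ray valued $1$. Either way it suffices to treat $\dim\mathcal{H}=3$, and since every orthonormal basis of $\mathbb{R}^{3}$ is one of $\mathbb{C}^{3}$, it even suffices to derive a contradiction using only real rays.

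For the three-dimensional core I would exhibit an explicit \emph{KS set}: a finite set $S$ of unit vectors in $\mathbb{R}^{3}$ whose mutual-orthogonality relations contain enough triples that no $\{0,1\}$-labelling of $S$ can place exactly one $1$ in every such triple, and then certify non-colourability by a finite case analysis (equivalently, a terminating backtracking search). The original $117$-vector configuration of Kochen and Specker does this, as do smaller later witnesses in $\mathbb{R}^{3}$ such as Peres's $33$ rays or the $31$ rays of Conway and Kochen. The structural engine behind any such construction is the Kochen--Specker gadget $\Gamma_{1}$: a finite sub-configuration which, assuming each orthogonal triple it meets carries exactly one $1$, forces two rays separated by a sufficiently small angle (Kochen and Specker take the bound $\arcsin(1/3)$) not to both be labelled $1$. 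Chaining copies of $\Gamma_{1}$ between successive nearby rays propagates this constraint continuously around the sphere and, together with the propagation rule above (which forces every ray orthogonal to a labelled-$1$ ray to be labelled $0$), yields the contradiction. Since the non-$\{0,1\}$-colourability of a fixed finite $S\subset\mathbb{R}^{3}$ is a concrete bounded combinatorial fact, this completes the argument.

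The hard part is entirely in the three-dimensional step; the reduction above is routine bookkeeping. The delicate ingredient is the $\Gamma_{1}$ gadget---selecting its auxiliary rays and checking the angular bound $\arcsin(1/3)$---or, in the purely finite route, locating a set $S$ small enough that the case analysis is actually manageable. Once such a gadget or set is in hand, the propagation rule, the chaining, and the dimensional reduction are all straightforward.
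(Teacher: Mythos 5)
The paper does not actually prove Theorem~\ref{thm:KS}: it imports it from Kochen and Specker \cite{KS67} (remarking only that it also follows from Gleason's theorem), so the comparison is with the classical argument you are reconstructing. Your reduction to dimension three is correct and cleanly done: the propagation rule, the dichotomy over whether some three-dimensional subspace contains a $1$-valued ray (the second horn should be phrased as that global statement, which is how you in fact use it, rather than as the complement for the initially fixed $V$), and the further restriction to real rays are all sound; and appealing to an explicit finite uncolourable set in $\mathbb{R}^3$ (the $117$ KS rays, Peres's $33$, or Conway--Kochen's $31$), whose uncolourability is a bounded combinatorial fact, finishes the proof. This is exactly the route of the cited original, so in spirit it is the ``same'' proof the paper points to.

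The one genuine slip is your description of the engine of the 1967 construction: the $\Gamma_1$ gadget does not force two nearby rays ``not to both be labelled $1$''. Its property is the reverse-direction constraint: for rays $a,b$ separated by an angle $\theta$ with $\sin\theta\leq 1/3$, no admissible assignment has $c(a)=1$ and $c(b)=0$; that is, the value $1$ propagates to nearby rays. It is this propagation, chained in steps smaller than $\arcsin(1/3)$ from a $1$-valued ray to a ray orthogonal to it, that contradicts your own propagation rule (orthogonal to a $1$ must be $0$). With the property as you state it, the chaining yields no contradiction at all---a colouring could simply keep its $1$-valued rays far apart, as the single-qubit hemisphere colouring illustrates in spirit. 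Since you also offer the purely finite route through an explicit KS set certified by case analysis, the overall argument stands, but the gadget lemma must be corrected if you take the $\Gamma_1$-chaining route.
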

This formulation of the result implies the traditional statement of the Kochen--Specker theorem in terms of \emph{valuation} functions on self-adjoint operators (see Appendix \ref{app:trad}). The KS theorem also admits \emph{statistical} proofs, whereby it is shown that the statistics produced by a collection of measurements performed on a fixed quantum state cannot be derived from a KS-noncontextual ontological model \cite{KCBS, KS18}.\footnote{The notion of contextuality was extended beyond the Kochen--Specker notion to a generalised notion of contextuality in Ref.~\cite{Spekkens05}. KS-noncontextuality, within this generalised framework, is recovered as a conjunction of two assumptions on ontological models of any operational theory: measurement noncontextuality and outcome determinism for projective measurements. As we will not use much of the machinery of generalised contextuality in this paper (and since we are focusing on quantum theory rather than operational theories in general), we refer the interested reader to Refs.~\cite{Spekkens05, KS15, Kunjwal16,KS18, Kunjwal19, Kunjwal20} for discussions of generalised contextuality and its connection with KS-contextuality.} We will discuss these proofs of the KS theorem, and how they differ from logical proofs, more extensively in Sec.~\ref{sec:bellks}.

A set of rays in a Hilbert space can be represented by a hypergraph in which there is a vertex for each ray and each complete set of rays constitutes a hyperedge. Such a hypergraph is an example of a \emph{contextuality scenario} \cite{AFL15}. The map $c$ in Theorem \ref{thm:KS} then defines a special type of 2-colouring of this hypergraph, which we will call a \emph{KS-colouring}.\footnote{Here ``0'' and ``1" stand in for two possible ``colours" that could be assigned to vertices according to the rule specified in Theorem \ref{thm:KS}.} A set of rays for which there does not exist a KS-colouring is known as a \emph{KS set}.

A KS-colouring of a set of rays is more than simply a mathematical tool: KS-colourings define the ontic states in a KS-noncontextual ontological model. The ontic state determines the outcome of any measurement with certainty, and this choice of one deterministic outcome from each measurement is exactly a KS-colouring. The existence of a KS set in any Hilbert space of dimension greater than two then proves the KS theorem (via Theorem \ref{thm:KS})

The fact that all the rays in a Hilbert space, $\cH$, of dimension three or more form a KS set (i.e., a proof of the KS theorem) follows from Gleason's theorem. 
Kochen and Specker \cite{KS67}, however, gave an explicit construction of a \textit{finite} KS set in a three-dimensional Hilbert space, arguably providing a much simpler proof of the KS theorem than relying on Gleason's theorem.\footnote{Hrushovski and Pitowsky \cite{HP04}, however, showed that Gleason's theorem, when combined with the compactness theorem of first-order logic, implies the \textit{existence} of a finite KS set. Hence, Gleason's theorem implies not only the KS theorem but something stronger, i.e., the existence of finite KS sets.}

\begin{thm}[Gleason \cite{Gleason57}]
Let $\cH$ be a separable Hilbert space of dimension at least three. Any map $f:\mathcal{P}(\cH)\rightarrow[0,1]$ satisfying
\begin{equation}
f(\Pi_1)+f(\Pi_2)+\cdots=f(\Pi_1+\Pi_2+\cdots)\,,
\end{equation}
for any set of mutually orthogonal projections $\left\{\Pi_1,\Pi_2,\ldots\right\}$, and $f(\I_\cH)=1$ where $\I_\cH$ is the identity operator on $\cH$, admits an expression
\begin{equation}\label{bornrule}
f(\Pi)=\Tr(\Pi\rho)\,,
\end{equation} 
 for some density operator $\rho$ on $\cH$.
\end{thm}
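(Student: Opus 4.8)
The plan is to follow Gleason's original three-step strategy: reduce the problem to real three-dimensional Hilbert spaces, solve it there by harmonic analysis on the sphere, and reassemble the pieces into a global density operator. Call a map $f$ as in the statement a \emph{frame function}; its defining feature is that $\sum_j f(\ket{\psi_j}\!\bra{\psi_j})$ equals the constant $f(\I_\cH)=1$ for every orthonormal basis $\{\ket{\psi_1},\ket{\psi_2},\ldots\}$ of $\cH$. The first observation is that the restriction of $f$ to the rank-one projections supported on any closed subspace $\mathcal{K}\subseteq\cH$ is again a frame function on $\mathcal{K}$ (with a possibly smaller constant), and in particular every real three-dimensional subspace on which complex and real orthogonality coincide carries an induced nonnegative frame function on its unit sphere $S^2$. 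So it suffices to (i) prove the theorem for $\cH=\R^3$, i.e.\ to show that every nonnegative frame function on $S^2$ has the form $\mathbf{x}\mapsto\langle\mathbf{x},A\mathbf{x}\rangle$ for some symmetric positive semidefinite $A$, and then (ii) glue these quadratic forms consistently across all such real subspaces of $\cH$; a complex Hilbert space of dimension at least three contains enough of them to fix a single self-adjoint operator $\rho$ --- including the imaginary parts of off-diagonal entries, for which one uses subspaces such as $\mathrm{span}_\R\{\ket{e_1},\ket{e_2},\ii\ket{e_3}\}$ --- and separability then lets one pass from the finite-dimensional conclusion to the general case by a monotone limit over an exhausting sequence of finite-rank projections.

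The substantive content is step (i). One first shows that a nonnegative frame function $f$ on $S^2$ is bounded --- built into the hypothesis --- and, crucially, \emph{continuous}; continuity follows from a quantitative geometric estimate controlling the oscillation of $f$ on a small spherical cap by its values on nearby orthonormal triples. Once $f$ is continuous it has a spherical-harmonic expansion $f=\sum_{\ell\geq0}f_\ell$, in which $f_\ell=0$ for odd $\ell$ because a ray and its negative define the same projection, so $f(\mathbf{x})=f(-\mathbf{x})$. The frame condition can then be rewritten as the linear relation ``$f(\mathbf{x})$ plus twice the average of $f$ over the great circle orthogonal to $\mathbf{x}$ equals the constant $f(\I_{\R^3})$''. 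The great-circle averaging operator is $\mathrm{SO}(3)$-equivariant, hence acts on the degree-$\ell$ component by an explicit scalar (a Legendre value $P_\ell(0)$), and the resulting coefficient $1+2P_\ell(0)$ of $f_\ell$ in the relation is nonzero for $\ell=0$ and for every even $\ell\geq4$, while it vanishes precisely at $\ell=2$. Since the right-hand side is constant (degree $0$ only), this forces $f_\ell=0$ for all $\ell\notin\{0,2\}$, so $f=f_0+f_2$ --- exactly the family $\mathbf{x}\mapsto\langle\mathbf{x},A\mathbf{x}\rangle$ with $A$ symmetric --- and nonnegativity of $f$ forces $A\succeq0$.

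For the reassembly, the quadratic forms obtained on overlapping real subspaces agree on their intersections, so the associated symmetric operators are mutually consistent compressions of a single bounded self-adjoint operator $\rho$ on $\cH$. The frame condition gives $\langle\psi|\rho|\psi\rangle=f(\ket{\psi}\!\bra{\psi})\in[0,1]$ for every unit vector, so $\rho\succeq0$, and $\Tr\rho=\sum_j\langle e_j|\rho|e_j\rangle=\sum_j f(\ket{e_j}\!\bra{e_j})=f(\I_\cH)=1$ by countable additivity of $f$; hence $\rho$ is a density operator. Finally, $f(\Pi)=\Tr(\Pi\rho)$ holds first for finite-rank $\Pi$ by linearity, and then for every $\Pi\in\PH$ by writing $\Pi$ as a countable orthogonal sum of rank-one projections and invoking the $\sigma$-additivity of both $f$ and $\Tr(\,\cdot\,\rho)$ on the positive trace-class operator $\rho$.

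The main obstacle is unquestionably the continuity step in (i): the spherical-harmonic vanishing and the gluing are essentially bookkeeping, but deducing continuity of a nonnegative frame function from the frame condition alone requires a genuinely delicate geometric argument and is the technical heart of Gleason's theorem. Since Gleason's theorem is used here only as a black box, an acceptable shortcut is to cite one of the streamlined ``elementary'' proofs (e.g.\ Cooke--Keane--Moran) in full.
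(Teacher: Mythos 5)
The paper does not prove this statement at all: Gleason's theorem is quoted directly from Ref.~\cite{Gleason57} and used as a black box, so there is no internal proof to compare against. Judged on its own terms, your outline is a faithful reconstruction of Gleason's original argument: restriction of the frame function to completely real three-dimensional subspaces, reduction to nonnegative frame functions on $S^2$, the great-circle averaging identity whose degree-$\ell$ multiplier $1+2P_\ell(0)$ vanishes only at $\ell=2$ (odd $\ell$ being killed by the evenness $f(\mathbf{x})=f(-\mathbf{x})$), the conclusion $f=f_0+f_2$, i.e.\ a quadratic form $\mathbf{x}\mapsto\langle\mathbf{x},A\mathbf{x}\rangle$ with $A\succeq0$, and the reassembly plus $\sigma$-additivity yielding a positive trace-one $\rho$ with $f(\Pi)=\Tr(\Pi\rho)$ extended from rank-one to arbitrary projections. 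The pieces you do spell out (the Funk--Hecke scalar, the trace computation, the restriction-to-subspace observation) are correct.

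The one genuine gap---which you flag yourself---is the continuity of a nonnegative frame function on $S^2$: it is asserted, not proved, and it is not bookkeeping but the entire difficulty of the theorem; without it the spherical-harmonic expansion has no purchase, since a priori $f$ need not even be measurable. Similarly, the gluing of the quadratic forms across real subspaces of the complex space (recovering the imaginary parts of $\rho$, and the passage to infinite dimensions) is described rather than carried out, though that part is comparatively routine once the three-dimensional real case is settled. Since the paper itself treats Gleason's theorem purely as a citation, your closing suggestion---to invoke a complete published proof (Gleason's original, or an elementary treatment such as Cooke--Keane--Moran)---is exactly the paper's own resolution and is the appropriate one here.
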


The maps $f$ in Gleason's theorem are known as \emph{frame functions}. Any KS-colouring, $c$, on the rays (and equivalently the rank-one projections) of a Hilbert space $\cH$ would extend to a $\{0,1\}$-valued frame function on $\PH$ via $c(\Pi_1+\Pi_2+\cdots)\equiv c(\Pi_1)+c(\Pi_2)+\cdots$ for mutually orthogonal sets of rank-one projections $\{\Pi_1,\Pi_2,\ldots\}$. In dimensions greater than two, however, Gleason's theorem shows that such a frame function does not exist, hence the set of all rays is not KS-colourable.

Neither the KS theorem nor Gleason's theorem hold for a single two-level system, i.e.~a qubit. The theorems do, however, hold for systems of multiple qubits. We will examine whether the onset of the applicability of these theorems is due to the presence entanglement in the rays of the measurements or if it is due to a weaker notion of nonlocality (without entanglement) \cite{BDF99}. Specifically, if we describe a projective measurement on a composite Hilbert space $\cH_1\otimes\cdots\otimes\cH_n$ by a sequence of projections $\Pi_1,\Pi_2,\ldots$, we say the measurement is \emph{unentangled} if the support of $\Pi_k$ admits a basis 
of \emph{product} vectors, i.e.~vectors $\psi=\psi_1\otimes\cdots\otimes\psi_n \in \cH$, where $\psi_j\in\cH_j$ for all $j$. Rank-one unentangled measurements are sufficient for our argument. Each such measurement can be described by a complete set of \emph{product rays}, that is, rays consisting of product vectors.

In the case of Gleason's theorem, Wallach \cite{Wallach02} showed that for a multiqudit system in which each subsystem has dimension at least three, Gleason's theorem can be proved using only rank-one unentangled projections. Specifically, frame functions on rank-one unentangled measurements can always be described by the Born rule, as in Eq.~\eqref{bornrule}. However, if even one subsystem has dimension two, the result fails to hold, i.e.~there exist frame functions on unentangled projections which cannot be described as in Eq.~\eqref{bornrule}.

To address the case of the KS theorem, we will be interested in rays in $\C^2$, or \emph{qubit rays}, and rays in ${\C^2}^{\otimes n}$, or \emph{n-qubit rays}. Given a pair of orthogonal unit vectors $\{\ket{0},\ket{1}\}$ in $\C^2$ a generic qubit ray can be expressed as 
\begin{equation}
\ket{\psi}=\cos(\theta/2)\ket{0}+e^{i\phi}\sin(\theta/2)\ket{1}\,,
\end{equation} 
for some $0\leq\theta\leq\pi$ and $0\leq\phi<2\pi$. The parameters $\theta$ and $\phi$ define a point on the Bloch sphere (see Fig.~\ref{hemi}). The pairs of orthogonal rays in $\C^2$ are given exactly by the pairs of antipodal points of the Bloch sphere. The product rays in ${\C^2}^{\otimes n}$ are then rays admitting an expression $\ket{\psi}=\ket{\psi_1}\otimes\cdots\otimes\ket{\psi_n}$ where $\ket{\psi_j}$ are qubit rays for all $1\leq j\leq n$.

%
%

\section{Kochen--Specker theorem  for multiqubit systems}\label{sec:mainresult}
In this section, we will show that any logical proof of the KS theorem on a multiqubit system requires entangled measurements, i.e.
\begin{restatable}{thm}{main}\label{thm:main}
	Any multiqubit Kochen--Specker set necessarily contains entangled rays.
\end{restatable}

We consider a KS-colouring $c_n$ on product rays of ${\C^2}^{\otimes n}$ defined in terms of a KS-colouring $c_1$ on the rays of $\C^2$. Specifically, $c_n(\ket{\psi_1}\otimes\cdots\otimes\ket{\psi_n})=\prod_{j=1}^nc_1(\ket{\psi_j})$. We choose $c_1$ to be particularly easy to visualise KS-colouring but the argument follows for any choice. To specify our KS-colouring, we first define a {\em north} qubit ray:
\begin{defn}
A \emph{north} qubit ray is of the form
\begin{equation}\ket{\psi}=\cos(\theta/2)\ket{0}+e^{i\phi}\sin{\theta/2}\ket{1}\end{equation} 
where $0\leq\theta<\pi/2$ and $0\leq\phi<2\pi$ or $\theta=\pi/2$ and $\pi<\phi\leq 2\pi$. We denote the set of north rays, depicted on the Bloch sphere in Fig.~\ref{hemi}, by $\mathcal{N}$. 
\end{defn}
\begin{figure}
\centering
\includegraphics[width=\columnwidth]{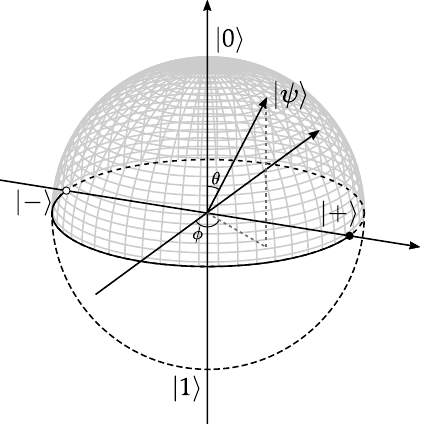}
\caption{\label{hemi}The Bloch sphere with the qubit north rays depicted by the hatched grey hemisphere including the solid point at $\ket{+}$ and the solid black half of the equator but not the open point at $\ket{-}$ or the dashed half of the equator.}
\end{figure}
Note that, for any north qubit ray $\ket{\psi}$, the ray $\ket{\psi^\perp}$ satisfying $\braket{\psi|\psi^\perp}=0$ is not a north ray, and vice versa, i.e.~every pair of orthogonal qubit rays contains exactly one north ray. This definition naturally extends to the case of $n$-qubit rays:
\begin{defn}
An \emph{all-north} $n$-qubit ray is a product ray $\ket{\psi}=\ket{\psi_1}\cdots\ket{\psi_n}$ such that $\ket{\psi_j}\in\mathcal{N}$ for all $j\in\left\{1,\ldots,n\right\}$. We denote the set of all-north $n$-qubit rays by $\mathcal{N}^n$.
\end{defn}
We need the following two lemmas in order to prove our first main result:
\begin{lem}\label{lem:2q}
Any complete set of two-qubit product rays contains exactly one all-north ray.
\end{lem}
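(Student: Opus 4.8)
The plan is to first reduce any complete set of two-qubit product rays to a simple normal form, and then count its all-north rays directly, using the fact recalled above that every pair of orthogonal qubit rays contains exactly one north ray.

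\textbf{Step 1 (normal form).} I claim that, after relabelling and possibly swapping the two tensor factors, every orthonormal basis of $\C^2\otimes\C^2$ consisting of product vectors has the form
\begin{equation*}
\{\ket{0}\ket{u},\,\ket{0}\ket{u^\perp},\,\ket{1}\ket{v},\,\ket{1}\ket{v^\perp}\},
\end{equation*}
for some orthonormal bases $\{\ket{0},\ket{1}\},\{\ket{u},\ket{u^\perp}\},\{\ket{v},\ket{v^\perp}\}$ of $\C^2$. To see this, fix one basis vector $\ket{a_1}\ket{b_1}$; a product vector $\ket{a}\ket{b}$ is orthogonal to it iff $\ket{a}=\ket{a_1^\perp}$ or $\ket{b}=\ket{b_1^\perp}$ (orthogonality of rays in $\C^2$ means being the unique complement), so the other three basis vectors all lie in the union of the two one-parameter families $\{\ket{a_1^\perp}\ket{b}\}_b$ and $\{\ket{a}\ket{b_1^\perp}\}_a$. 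By pigeonhole, two of the three lie in a common family, say $\ket{a_1^\perp}\ket{b_2}$ and $\ket{a_1^\perp}\ket{b_3}$; their first factors coincide, so $\ket{b_3}=\ket{b_2^\perp}$. Since no qubit vector is orthogonal to both $\ket{b_2}$ and $\ket{b_2^\perp}$, the last basis vector cannot lie in the family $\{\ket{a_1^\perp}\ket{b}\}_b$, so it has the form $\ket{a}\ket{b_1^\perp}$; requiring it to be orthogonal to both $\ket{a_1^\perp}\ket{b_2}$ and $\ket{a_1^\perp}\ket{b_2^\perp}$ then forces $\ket{a}=\ket{a_1}$. This is the claimed form (with the first tensor factor in the split role; the other pigeonhole outcome gives the form with the factors swapped). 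The same conclusion can be reached by encoding the pairwise orthogonality relations as two graphs on the four indices, each a disjoint union of complete bipartite graphs with at most four edges and together covering $K_4$, which forces one of them to be $K_{2,2}$.

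\textbf{Step 2 (counting).} In the normal form, exactly one of $\ket{0},\ket{1}$ is north; by symmetry assume it is $\ket{0}$. An all-north ray of the basis must then have $\ket{0}$ as its first factor, so it is $\ket{0}\ket{u}$ or $\ket{0}\ket{u^\perp}$; exactly one of $\ket{u},\ket{u^\perp}$ is north, so exactly one of these two is all-north, whereas neither of $\ket{1}\ket{v},\ket{1}\ket{v^\perp}$ is. Hence the basis contains exactly one all-north ray.

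The only substantive part is the normal form of Step 1; once it is available Step 2 is immediate. I expect the mild case distinctions in Step 1 (which family the pigeonhole lands in, and boundary configurations in which a basis vector equals the common vector $\ket{a_1^\perp}\ket{b_1^\perp}$ of the two families) to be the only points needing care, and it is worth noting that the half-open definition of $\mathcal{N}$ on the equator is exactly what makes ``exactly one'' hold without any exceptions.
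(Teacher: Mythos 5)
Your proof is correct and follows essentially the same route as the paper: the paper's proof simply asserts the normal form $\{\ket{\psi_1}\ket{\psi_2},\ket{\psi_1}\ket{\psi_2^\perp},\ket{\psi_1^\perp}\ket{\psi_3},\ket{\psi_1^\perp}\ket{\psi_3^\perp}\}$ (up to swapping the factors) and concludes ``by inspection,'' which is exactly your Step 1 followed by your Step 2. The only difference is that you supply the pigeonhole/orthogonality argument establishing the normal form, which the paper leaves implicit.
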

\begin{proof}
A generic two-qubit product basis takes the form
\begin{equation}\label{eq:2basis}
\left\{\ket{\psi_1}\ket{\psi_2},\ket{\psi_1}\ket{\psi_2^\perp},\ket{\psi_1^\perp}\ket{\psi_3},\ket{\psi_1^\perp}\ket{\psi_3^\perp}\right\},
\end{equation}
for $\ket{\psi_1},\ket{\psi_2},\ket{\psi_3}\in\C^2$, noting that the systems may be swapped. By inspection, one can see that exactly one of the rays in Eq.~\eqref{eq:2basis} is all-north.
\end{proof}
The proof of the following lemma is based on that of Prop.~1 in \cite{CD17}.
\begin{lem}\label{lem:north}
Any complete set of $n$-qubit product rays contains exactly one all-north ray.
\end{lem}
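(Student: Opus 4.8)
The plan is to prove Lemma \ref{lem:north} by induction on the number of qubits $n$, using Lemma \ref{lem:2q} for the base-case machinery and structural facts about product bases to reduce the $n$-qubit case to the $(n-1)$-qubit case. The key structural input is that a complete set of product rays in ${\C^2}^{\otimes n}$ can be organised with respect to the last tensor factor: among the $2^n$ product rays, group them by their first $n-1$ components. Since the full set is an orthonormal basis of ${\C^2}^{\otimes n}$, each fixed $(n-1)$-qubit component $\ket{\psi_1}\cdots\ket{\psi_{n-1}}$ appearing in the basis must be paired with qubit rays in the last slot that, together with the orthogonality constraints coming from the other basis elements, are forced to form an orthonormal basis $\{\ket{\chi},\ket{\chi^\perp}\}$ of $\C^2$ — this is precisely the kind of pairing analysis done in Lemma \ref{lem:2q}, and the reference to Prop.~1 of \cite{CD17} suggests the combinatorial backbone is exactly this recursive decomposition of product bases into a tree of local orthogonal pairs.

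The main steps, in order: (i) establish that any complete set of $n$-qubit product rays admits a recursive ``branching'' structure — choose some qubit (say the last), and show the rays fall into groups, each group sharing a common $(n-1)$-qubit prefix and each group being a full orthogonal pair in the last qubit; more precisely, show the set of distinct $(n-1)$-qubit prefixes, with appropriate multiplicities, itself forms a complete set of $(n-1)$-qubit product rays, possibly after merging. Actually the cleaner route is: pick any single qubit factor $j$; by orthogonality, the rays split into pairs $\{\ket{\phi}\ket{\alpha}_j, \ket{\phi}\ket{\alpha^\perp}_j\}$ (product over the other factors written as $\ket{\phi}$) only if the $\ket{\phi}$-parts are shared — but in general the branching can be unbalanced, so one must argue by the dimension count that the ``collapsed'' set of other-factor components forms a complete set on ${\C^2}^{\otimes(n-1)}$. (ii) By the induction hypothesis, exactly one such collapsed $(n-1)$-qubit component, say $\ket{\psi^{\star}}$, is all-north. (iii) Using the observation recorded just before the lemma — every orthogonal qubit pair contains exactly one north ray — the unique all-north $n$-qubit ray in the original set is $\ket{\psi^{\star}}$ tensored with whichever ray of its paired qubit pair in slot $j$ is north; and no other basis ray can be all-north, since any other ray has a non-all-north component in the first $n-1$ factors. (iv) Conclude.

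The hard part will be step (i): making the branching/collapsing argument rigorous. A generic product basis of ${\C^2}^{\otimes n}$ is \emph{not} simply a product of a basis of one factor with a basis of the rest — the orthogonality relations can be satisfied ``locally at different branches'' (as the nonlocal basis in Eq.~\eqref{eq:nlbasis} illustrates), so the recursion tree can be genuinely unbalanced and one cannot just write the basis as $\{\ket{a_i}\}\otimes\{\ket{b_k}\}$. The careful claim needed is: fixing a factor, the multiset of $(n-1)$-qubit components obtained by, for each orthogonal pair in that factor, recording the shared complementary component once, is a complete set of product rays on ${\C^2}^{\otimes(n-1)}$ — this requires checking both that these components are pairwise orthogonal-or-equal in a way compatible with forming a basis and that the total count works out to $2^{n-1}$ with the right orthogonality structure. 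Once this combinatorial lemma about product bases is in hand (which is the content borrowed from \cite{CD17}), the north-ray bookkeeping in steps (ii)--(iv) is immediate from the single-qubit observation. I would also note that the base case $n=1$ is trivial (a qubit basis is an antipodal pair, exactly one of which is north) and $n=2$ is Lemma \ref{lem:2q}.
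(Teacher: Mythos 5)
Your overall skeleton (induction on $n$, reduce along the last tensor factor, then invoke the single-qubit fact that every orthogonal qubit pair contains exactly one north ray) matches the paper's strategy, but the structural claim you lean on in step (i) is false, and since you explicitly defer the whole difficulty to it, this is a genuine gap. You propose that, fixing the last factor, the basis organises into orthogonal pairs $\{\ket{\phi}\ket{\alpha},\ket{\phi}\ket{\alpha^\perp}\}$ sharing a common prefix $\ket{\phi}$, and that recording each shared prefix once yields a complete set of $(n-1)$-qubit product rays. The paper's own example, the basis of Eq.~\eqref{eq:nlbasis}, already refutes this: grouping by the third qubit, only $\ket{10+}$ and $\ket{10-}$ share a prefix, so your ``collapsed'' multiset consists of the single ray $\ket{10}$, nowhere near a complete two-qubit set. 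Generic product bases have no such pairing structure, and no amount of ``merging'' of the kind you sketch recovers one; the recursion tree really is unbalanced, exactly as you worried.

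What the paper does instead is not a collapse of pairs but a multiplicity-counting argument. Let $B$ be the set of distinct last-qubit rays appearing in the basis, let $\mu(\ket{\psi})$ count how many basis elements carry $\ket{\psi}$ in the last slot, and choose a maximal set $E\subseteq B$ of pairwise non-orthogonal rays with $\mu(\ket{\psi})\geq\mu(\ket{\psi^\perp})$ for each $\ket{\psi}\in E$. Because two basis elements whose last-qubit rays are non-orthogonal must have orthogonal prefixes, the prefixes attached to $E$ are mutually orthogonal, so $\mu(E)\leq 2^{n-1}$; combined with $\mu(E)+\mu^\perp(E)=2^n$ and $\mu(E)\geq\mu^\perp(E)$ this forces $\mu(E)=2^{n-1}$, hence those prefixes (with multiplicity, and they need not be shared between pairs at all) form a complete $(n-1)$-qubit product set, and moreover $\mu(\ket{\psi})=\mu(\ket{\psi^\perp})$ lets one take $E$ to consist of north rays. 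Only then does the induction hypothesis apply, giving a unique all-north prefix whose partner in the last slot lies in the all-north set $E$; uniqueness follows simply because no two all-north rays are orthogonal (not from your step (iii) reasoning). In the example above this machinery selects, say, $E=\{\ket{0},\ket{+}\}$ with prefixes $\{\ket{00},\ket{+1},\ket{-1},\ket{10}\}$, which is indeed a complete two-qubit product set even though no pairing by shared prefix exists. So to repair your proposal you would have to replace your step (i) with this counting argument (or something equivalent); as written, the lemma you hope to borrow from \cite{CD17} is not the statement you formulated, and your formulation fails.
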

\begin{proof}

We will prove this statement by induction on the number of qubits.

Firstly, an $n$-qubit complete set can contain at most one all-north ray since no two all-north rays are orthogonal.

Now, assume the result holds for $m$ qubits and consider a complete set of $(m+1)$-qubit product rays
\begin{equation}
P=\left\{\ket{\Psi_1}\ket{\psi_1},\ldots,\ket{\Psi_{2^{m+1}}}\ket{\psi_{2^{m+1}}}\right\},
\end{equation}
where $\ket{\Psi_j}\in{\C^2}^{\otimes m}$ and $\ket{\psi_j}\in\C^2$ for $j\in\left\{1,\dots,2^{m+1}\right\}$. Let $B$ denote the set of distinct rays of the final qubit, i.e.
\begin{multline}
B=\Big\{\ket{\psi}\in\C^2\Big|\ket{\Psi}\ket{\psi}\in P\\\text{ for some }\ket{\Psi}\in{\C^2}^{\otimes m}\Big\},
\end{multline}
noting that we may have $\ket{\psi_j}=\ket{\psi_k}$ for $j\neq k$, and let $E$
be a maximal set of non-orthogonal rays from $B$, i.e.~each ray in $B\backslash E$ is orthogonal to some ray in $E$. Denote by $J$ the subset of $\left\{1,\ldots,2^{m+1}\right\}$ such that $\ket{\psi_j}\in E$ for all $j\in J$.

Define the map $\mu$ to count the number of rays in the $(m+1)$-qubit basis, $P$, for which the final qubit is associated to a given ray, i.e.
\begin{equation}
\mu(\ket{\psi})=\abs{\left\{\ket{\Psi}\ket{\psi}\middle|\ket{\Psi}\in{\C^2}^{\otimes m},\ket{\Psi}\ket{\psi}\in P\right\}}.
\end{equation}
We may assume (without loss of generality) that the set $E$ is chosen such that 
\begin{equation}\label{eq:assum}
\mu(\ket{\psi})\geq\mu(\ket{\psi^\perp}),
\end{equation} for all $\ket{\psi}\in E$, since if $\mu(\ket{\psi})<\mu(\ket{\psi^\perp})$ then $\ket{\psi}$ could be replaced by $\ket{\psi^\perp}$ in $E$ without violating the requirements on the set $E$.

The maximality of $E$ implies that for every $\ket{\psi}\in B\backslash E$ we have $\ket{\psi^\perp}\in E$. Let $\mu(E)=\sum_{\ket{\psi}\in E}\mu(\ket{\psi})=\abs{J}$ and $\mu^\perp(E)=\sum_{\ket{\psi}\in E}\mu(\ket{\psi^\perp})$. Note that 
\begin{equation}
\mu(E)+\mu^\perp(E)=\sum_{\ket{\psi}\in B}\mu(\ket{\psi})=2^{m+1}
\end{equation}
and $\mu(E)\geq\mu^\perp(E)$ by the assumption in Eq. \eqref{eq:assum}.

If $\ket{\psi_j},\ket{\psi_k}\in E$ for $j\neq k$ then $\braket{\Psi_j|\Psi_k}=0$, since $\braket{\psi_j|\psi_k}\neq0$. Hence the set 
\begin{equation}\label{eq:mbasis}
\left\{\ket{\Psi_j}|j\in J\right\},
\end{equation}
comprises $\mu(E)$ mutually orthogonal rays of ${\C^2}^{\otimes m}$. It follows that $\mu(E)\leq 2^m$ which gives 
\begin{equation}
2\mu(E)\leq 2^{m+1}=\mu(E)+\mu^\perp(E),
\end{equation}
and $\mu(E)\leq\mu^\perp(E)$. Therefore, we find 
\begin{equation}\label{eq:mus}
\mu(E)=\mu^\perp(E).
\end{equation} Furthermore, it then follows from Eq. \eqref{eq:assum} that 
\begin{equation}\label{eq:muequal}
\mu(\ket{\psi})=\mu(\ket{\psi^\perp}),
\end{equation} for all $\ket{\psi}\in E$ and hence for all $\ket{\psi}\in B$ (since, by the maximality of $E$, if $\ket{\psi}\in B$ is not contained in $E$ then $\ket{\psi^\perp}\in E$). The definition of $\mu$ is independent of $E$, and we find that assumption \eqref{eq:assum} holds with equality for any choice of $E$. We may now choose $E$ to consist entirely of north rays: given any maximal set $E'$ of non-orthogonal rays from $B$ we have by Eq. \eqref{eq:muequal} that $\ket{\psi^\perp}$ is in $B$ for any $\ket{\psi}\in E'$ and therefore all $\ket{\psi}\in E'$ that are not north rays can be exchanged with the north rays $\ket{\psi^\perp}$ to produce the all-north set $E$.

We also find from Eq. \eqref{eq:mus} that $\mu(E)=2^m=|J|$  so that the set of Eq.~\eqref{eq:mbasis} is a complete set of product rays in ${\C^2}^{\otimes m}$. By our assumption (for the proof by induction), this $m$-qubit complete set contains exactly one all-north ray, say $\ket{\Psi_k}$. The ray $\ket{\Psi_k}\ket{\psi_k}\in P$ is then an all-north ray since $k\in J$ implies that  $\ket{\psi_k}\in E$, which is an all-north set of rays. Since $P$ can contain at most one all-north ray, it contains exactly one all-north ray.

By Lemma \ref{lem:2q}, since the statement of Lemma \ref{lem:north} holds for $n=2$, it holds for all $n\in\mathbb{N}$ by induction.
\end{proof}

We can now prove Theorem \ref{thm:main}.

\main*
\begin{proof}
We will show the contrapositive: the contextuality scenario generated by all $n$-qubit product rays is KS-colourable.
Define $c_n:\Sigma\left({\C^2}^{\otimes n}\right)\rightarrow\left\{0,1\right\}$ as follows.
\begin{equation}
c_n\left(\ket{\psi}\right)=
\begin{cases}
1\text{ if }\ket{\psi}\in\mathcal{N}^n,\\
0\text{ otherwise.}
\end{cases}
\end{equation}
By Lemma \ref{lem:north}, every complete set of $n$-qubit product rays contains exactly one all-north ray, hence  $c_n$ defines a KS-colouring on the contextuality scenario generated by all $n$-qubit product rays.
\end{proof}

\section{A Kochen--Specker noncontextual ontological model for unentangled n-qubit systems}\label{sec:model}
In the previous section we demonstrated the existence of KS-colourings of the unentangled rays of $n$-qubit systems. Such colourings define ontic states in a KS-noncontextual ontological model. However, the existence of such ontic states alone is insufficient to conclude that some quantum statistics can be reproduced by such a model.

In this section we will show how these ontic states can, indeed, yield a KS-noncontextual ontological model for the  fragment of quantum theory consisting of $n$-qubit product states and $n$-qubit unentangled measurements. We do so via an extension of the single qubit model of Kochen and Specker \cite{KS67} using the formulation of Leifer \cite{Leifer14}.

Before we proceed further, we note that if we were to restrict our analysis to measurements that can be performed locally on each qubit---either only measurements in \emph{direct} product bases (see Eq.~\eqref{eq:direct} or Ref.~\cite{mcnulty2016mutually}) or, more generally, those that can be implemented via LOCC---construction of a KS-noncontextual model would not require the results from the previous section. However, we are considering the strictly larger class of all unentangled measurements (see Sec.~\ref{sec:prelims}) including those that cannot be implemented via LOCC, for example, a measurement in the basis of  Eq.~\eqref{eq:nlbasis}. Thus, our KS-noncontextual ontological model requires KS-colourings of \textit{all} unentangled multiqubit measurement bases and not only those implementable via LOCC.
With these clarifications out of the way, we can now proceed to detail our construction.

Given a unit vector $\ket{\psi}$ in Hilbert space we will denote by $[\psi]$ the rank-one projection onto the subspace spanned by $\ket{\psi}$. The ontic state space for the model is given by $\Lambda=S_1^2\times\cdots\times S_n^2$, where $S_j^2$ is a two-dimensional sphere representing the ontic state space of the $j$th qubit ($j=1,2,\dots,n$) and $\times$ is the Cartesian product. We represent each $n$-qubit ontic state $\lambda=(\lambda_1, \lambda_2, \dots, \lambda_n)\in\Lambda$ by a single vector $\vec{\lambda}$ in a real vector space $\mathbb{R}^{3n}$:
\begin{align}
\vec{\lambda}=&\vec{\lambda}_1+\vec{\lambda}_2+\cdots\nonumber\\
=&\begin{pmatrix}\sin\theta_1\cos\varphi_1\\\sin\theta_1\sin\varphi_1\\ \cos\theta_1\\0\\0\\0\\0\\\vdots\\0 \end{pmatrix}+\begin{pmatrix}0\\0\\0\\\sin\theta_2\cos\varphi_2\\\sin\theta_2\sin\varphi_2\\ \cos\theta_2\\0\\\vdots\\0 \end{pmatrix}+\cdots
\end{align} 
where $0\leq\theta_j \leq \pi$ and $-\pi<\varphi_j\leq\pi$ are the coordinates of $\lambda$ in the $j$th copy of $S^2$.
Similarly, we may describe an unentangled rank-one projection $[\psi]=[\psi_1\otimes\cdots\otimes\psi_n]$ by a vector
\begin{align}\label{eq:vectorpsi}
\vec{\psi}=&\vec{\psi}_1+\vec{\psi}_2+\cdots\nonumber\\
=&\begin{pmatrix}\sin\theta^\psi_1\cos\varphi^\psi_1\\\sin\theta^\psi_1\sin\varphi^\psi_1\\ \cos\theta^\psi_1\\0\\0\\0\\0\\\vdots\\0 \end{pmatrix}+\begin{pmatrix}0\\0\\0\\\sin\theta^\psi_2\cos\varphi^\psi_2\\\sin\theta^\psi_2\sin\varphi^\psi_2\\ \cos\theta^\psi_2\\0\\\vdots\\0 \end{pmatrix}+\cdots
\end{align}
where $0\leq\theta^\psi_j\leq\pi$ and $-\pi<\varphi^\psi_j\leq\pi$ are the spherical coordinates of $\vec{\psi_j}$ on the $j$th sphere $S^2$.

We will now define a deterministic response function whereby, essentially, an outcome $[\psi]$ occurs with probability one in a state $\lambda$ exactly when $\vec{\psi_j}$ is in the hemisphere centred around $\vec{\lambda}_j$ for all $j\in\{1,2,\dots,n\}$. First, we make the following definitions. Let $\ket{\lambda_j}$ denote the Hilbert space vector with spherical coordinates $(\theta_j,\varphi_j)$ on the $j$th qubit, i.e.~$\ket{\lambda_j}=\cos(\theta_j/2) \ket{0} + \mathrm{e}^{i\varphi_j} \sin(\theta_j/2)\ket{1}$. Then we define a unitary that maps $\ket{\lambda_j}$ to the north pole $\ket{0}$, i.e.~ $U_{{\lambda}_j}=\ket{0}\bra{\lambda_j}+\ket{1}\bra{\lambda_j^\perp}$.

Now, explicitly, given an unentangled projection $[\psi]=[\psi_1\otimes\cdots\otimes\psi_n]$, the probability of observing an outcome associated with $[\psi]$ of a measurement $M$ on a system in ontic state $\lambda$ is given by 
\begin{equation}\label{response}
	\Prob\left([\psi]|M,\lambda\right)=\begin{cases}1&\text{ if } \bigotimes_{j=1}^nU_{\lambda_j}\ket{\psi}\in\mathcal{N}^n,\\0&\text{ otherwise }\,.\end{cases}
\end{equation}
Lemma \ref{lem:north} shows that Eq.~\eqref{response} defines a valid response function for $n$-qubit product projections.

The following inequalities will be useful in proving that our ontological model reproduces quantum theory:
\begin{equation}\label{heavbounds}
	\prod_{j=1}^nH^0(\vec{\psi}_j\cdot\vec{\lambda})\leq\Prob\left([\psi]|M,\lambda\right)\leq\prod_{j=1}^nH^1(\vec{\psi}_j\cdot\vec{\lambda})
\end{equation}
for all unentangled $n$-qubit projections $[\psi]$ and $\lambda\in\Lambda$, where $H^{0,1}:\mathbb{R}\rightarrow\{0,1\}$ are two conventions for the Heaviside step function, 
\begin{equation}
	H^y(x)=\begin{cases}1 & \text{if }x>0\\
		y& \text{ if } x=0\\
		0 & \text{ if } x< 0\,.\end{cases}
\end{equation}

We now define the epistemic states of the model. Given that a product state $[\chi]=[\chi_1\otimes\chi_2\otimes\cdots\otimes\chi_n]$ (which can be described by a vector as in Eq. \eqref{eq:vectorpsi} with parameters $0\leq\theta^\chi_j\leq\pi$ and $-\pi<\varphi^\chi_j\leq\pi$) is prepared, the probability measure over the ontic states of the system is given by
\begin{equation}
	\mu_\chi(\Omega)=\int_\Omega \prod_{j=1}^n p_{\vec{\chi}_j}(\vec{\lambda})\sin \theta_jd\theta_jd\varphi_j,
\end{equation}
for $\Omega\subseteq\Lambda$, where 
\begin{equation}
	p_{\vec{\chi}_j}(\vec{\lambda})=\frac{1}{\pi}H^0(\vec{\chi}_j\cdot\vec{\lambda})\vec{\chi}_j\cdot\vec{\lambda}\,.
\end{equation}

Thus the probability of observing outcome $[\psi]$ of a measurement $M$ performed on a system prepared in quantum state $[\chi]$ is given by 
\begin{align}
&\Prob\left([\psi]|M,[\chi]\right)\nonumber\\=&\int_\Lambda\Prob\left([\psi]|M,\lambda\right)d\mu_\chi(\lambda)\nonumber\\
=&\int_\Lambda\Prob\left([\psi]|M,\lambda\right) \prod_{j=1}^np_{\vec{\chi}_j}(\vec{\lambda})\sin \theta_jd\theta_jd\varphi_j
\end{align}
Since the terms of the integrand are all non-negative, it follows from Eq.~\eqref{heavbounds} that 
\begin{align}
	&\int_\Lambda\prod_{j=1}^nH^0(\vec{\psi}_j\cdot\vec{\lambda}) \prod_{j=1}^np_{\vec{\chi}_j}(\vec{\lambda})\sin \theta_jd\theta_jd\varphi_j\\
	\leq&\Prob\left([\psi]|M,[\chi]\right)\\
	\leq&\int_\Lambda\prod_{j=1}^nH^1(\vec{\psi}_j\cdot\vec{\lambda}) \prod_{j=1}^np_{\vec{\chi}_j}(\vec{\lambda})\sin \theta_jd\theta_jd\varphi_j
\end{align} 

For both of these bounds ($y\in\{0,1\}$) we find
\begin{align}
	&\int_\Lambda\prod_{j=1}^nH^y(\vec{\psi}_j\cdot\vec{\lambda}) \prod_{j=1}^np_{\vec{\chi}_j}(\vec{\lambda})\sin \theta_jd\theta_jd\varphi_j\\
	=&\int_\Lambda\prod_{j=1}^n\frac{1}{\pi}H^y(\vec{\psi}_j\cdot\vec{\lambda})H^0(\vec{\chi}_j\cdot\vec{\lambda})\vec{\chi}_j\cdot\vec{\lambda}\nonumber\\
	&\sin \theta_jd\theta_jd\varphi_j\\
	\label{line:product1}=&\prod_{j=1}^n\int_0^\pi\int_{-\pi}^\pi\frac{1}{\pi}H^y(\vec{\psi}_j\cdot\vec{\lambda})H^0(\vec{\chi}_j\cdot\vec{\lambda})\vec{\chi}_j\cdot\vec{\lambda}\nonumber\\&\sin \theta_jd\theta_jd\varphi_j.
\end{align}

We can now evaluate each term in this product for both cases $y=0\text{ and }1$, following the method of Ref.~\cite{Leifer14} (where the $y=0$ case is considered), to find 
\begin{align}\label{finalint}
&\int_0^\pi\int_{-\pi}^\pi\frac{1}{\pi}H^y(\vec{\psi}_j\cdot\vec{\lambda})H^0(\vec{\chi}_j\cdot\vec{\lambda})\vec{\chi}_j\cdot\vec{\lambda}\nonumber\\
&\sin \theta_jd\theta_jd\varphi_j\nonumber\\
&=|\langle\psi_j,\chi_j\rangle|^2,
\end{align}
for both $y=0,1$. The argument goes as follows. Firstly, we choose our coordinates $\theta_j$ and $\varphi_j$ such that $\vec{\chi}_j=(1,0,0)$ and $\vec{\psi}_j=(\cos\phi,\sin\phi,0)$ for some $-\pi<\phi\leq\pi$. Then since $\vec{\lambda}_j=(\sin\theta_j\cos\varphi_j,\sin\theta_j\sin\varphi_j,\cos\theta_j)$, we find
\begin{align}
\vec{\chi}_j\cdot\vec{\lambda}_j&=\sin\theta_j\cos\varphi_j,\\
\vec{\psi}_j\cdot\vec{\lambda}_j&=\sin\theta_j\cos(\varphi_j-\phi).
\end{align}
The integrand of Eq. \eqref{finalint} is non-zero when $\vec{\chi}_j\cdot\vec{\lambda}_j$ is positive and when $\vec{\psi}_j\cdot\vec{\lambda}_j$ is positive (non-negative) for the case $y=0$ ($y=1$). These conditions are achieved for $y=0$ when $-\pi/2<\varphi_j<\pi/2$ and $-\pi/2+\phi<\varphi_j<\pi/2+\phi$ and similarly for $y=1$ but when the latter inequalities are no longer strict, i.e.~$-\pi/2+\phi\leq\varphi_j\leq\pi/2+\phi$. Since the integrals over a closed or open interval are equal, in both cases $y=0\text{, or }1$, if $\phi$ is non-negative we find
\begin{align}
&\begin{aligned}\int_0^\pi\int_{-\pi}^\pi\frac{1}{\pi}H^y(\vec{\psi}_j\cdot&\vec{\lambda})H^0(\vec{\chi}_j\cdot\vec{\lambda})\\&\vec{\chi}_j\cdot\vec{\lambda}\sin \theta_jd\theta_jd\varphi_j\nonumber
\end{aligned}
\\=&\frac1\pi\int_0^\pi\sin^2\theta_jd\theta_j\int^{\frac{\pi}{2}}_{-\frac{\pi}{2}+\phi}\cos\varphi_jd\varphi_j\\
=&\frac12\left(1+\cos\phi\right)=|\langle\psi_j,\chi_j\rangle|^2\,\nonumber.
\end{align}
We find the same value if $\phi$ is negative. Finally, we have shown
\begin{equation}
\begin{aligned}
	\Prob\left([\psi]|M,[\chi]\right)&=\prod_{j=1}^n|\langle\psi_j,\chi_j\rangle|^2\\&=|\langle\psi,\chi\rangle|^2\,.
\end{aligned}
\end{equation}
The model is easily extended to higher rank projections via $\Pr\left(\sum_j[\psi^j]|M,\lambda\right)=\sum_j\Pr\left([\psi^j]|M,\lambda\right)$ giving $\Pr\left(\sum_j[\psi^j]|M,[\chi]\right)=\sum_j\Pr\left([\psi^j]|M,[\chi]\right)$ for any mutually orthogonal projections $[\psi^j]$. The fact that this extension is well-defined follows from Lemma~\ref{lem:north}.
Thus the model reproduces the predictions of quantum theory for $n$-qubit product states and unentangled measurements.

Note that for the case $n=1$, this ontological model is equivalent to the Kochen--Specker model for projective measurements and pure states of a single qubit \cite{KS67, Leifer14}. Our generalization, however, is nontrivial because it crucially relies on Lemma~\ref{lem:north} to define the $n$-qubit response function of Eq.~\eqref{response}. For a single qubit, on the other hand, Lemma~\ref{lem:north} is trivial because every qubit state appears in exactly one basis which already implies that each basis contains exactly one north state.

The model can be extended to include mixtures of product states, e.g.~$\sum_{i=1}^Nq_i[\chi^i]$, for product states $\ket{\chi^i}$ by simply defining the probability measure for such a mixture as the same mixture of probability measures, i.e.~$\sum_{i=1}^Nq_i\mu_{\chi^i}$. However, note that such an extension is \emph{preparation contextual}, in the sense that there is no fixed probability measure for a given separable density operator since different decompositions of the same density operator as a mixture of product states result in different probability measures. Indeed, there exists no preparation noncontextual extension to mixed states, as follows from the impossibility of a preparation noncontextual ontological model for single qubit mixed states \cite{Spekkens05}.

An immediate consequence of the existence of this KS-noncontextual ontological model is that it allows us to obtain a tight statement on the relationship between entanglement and KS-contextuality in multiqubit systems. Namely, entanglement is necessary for any proof of the KS theorem, whether logical \cite{KS67, KS15} or statistical \cite{KCBS, KS18}. Theorem \ref{thm:main} showed that multiqubit entangled measurements are necessary for logical proofs of the KS theorem. Our construction of the KS-noncontextual ontological model implies that even for statistical proofs of the KS theorem on multiqubit systems, one requires entanglement, either in the state or in the measurements. A well-known instance of such a statistical proof is the violation of Bell inequalities by implementing local (hence, unentangled) measurements on an entangled multiqubit state. One can also construct statistical proofs of the KS theorem (where there exist KS-colourings of the measurements involved) in which the quantum state is a product state and there is entanglement in the measurements. For example, given a multiqubit entangled state that violates a Bell inequality, one can apply a global unitary to the state making it a product state and then apply the same unitary to all the measurements involved in the violation. This transformation preserves the probabilities, the orthogonality relations, and hence, the statistical proof, whilst inevitably making some of the measurements entangled.

\section{Bell and Kochen--Specker coincide for multiqubit systems with unentangled measurements}\label{sec:bellks}
In the previous section we found a KS-noncontextual ontological model for unentangled projective measurements and product pure states of multiqubit systems. The model straightforwardly extends to separable states, allbeit in a preparation contextual way. It follows that to obtain a multiqubit statistical proof of the Kochen--Specker theorem from unentangled measurements one must employ an entangled multiqubit state.\footnote{Note the contrast between preparation contextuality and KS-contextuality for multiqubit systems: the former requires no entanglement but for the latter, entanglement is necessary. It is known that any preparation noncontextual ontological model of a fragment of quantum theory is also necessarily KS-noncontextual but not conversely~\cite{LM13, Kunjwal16}. We can thus conclude that in any ontological model of multiqubit systems, KS-contextuality implies both, entanglement and preparation contextuality.} Is entanglement of the multiqubit state, however, sufficient to yield such a statistical proof? In this section, we find the answer to be negative, i.e.~there must exist an KS-noncontextual ontological model that includes---besides separable states and unentangled projective measurements---certain entangled states. Specifically, we find that the multiqubit states that can demonstrate KS-contextuality with unentangled measurements (in a finite contextuality scenario) are exactly those that  can violate a Bell inequality with local projective measurements.

The fact that there exist entangled multiqubit states yielding a statistical proof of the KS theorem with only unentangled measurements follows from the existence of entangled multiqubit states that violate Bell inequalities with only local projective measurements. This implication can be seen from both the more traditional \emph{observable-based} perspective \cite{Kunjwal15}, as well as the \emph{event-based} perspective \cite{AFL15,Gonda}. 

In the observable-based perspective we consider the local observables of each party in a Bell experiment. The set of local observables corresponding to one choice of setting for each party commute pairwise and, thus, form a context. We may then consider the KS-noncontextual ontological models respecting all such contexts. The mathematical constraints defining these models in this situation are equivalent to the constraints imposed by Bell's assumption of local causality in the original Bell scenario.\footnote{Both sets of constraints can be viewed as instances of the marginal problem for classical probability distributions via Fine's theorem~\cite{Fine82a, Fine82b, CF12, Abramsky2011, Kunjwal15}.} Hence, a quantum violation of a Bell inequality using projective measurements also provides a set of quantum statistics deriving from local, and hence unentangled, measurements that are incompatible with a KS-noncontextual ontological model.

To prove our result the event-based perspective will be more convenient. In order the see the connection between Bell scenarios and KS-contextuality in this perspective one needs to go beyond the idea of KS-colourings of contextuality scenarios to that of \emph{probabilistic models} on contextuality scenarios following the framework of Ref.~\cite{AFL15}.

In the following subsection, we define probabilistic models on contextuality scenarios and related notions, then show how a proof of Bell's theorem yields a statistical proof of the KS theorem (a known connection). In the next subsection, we then show that the converse relationship also holds for multiqubit systems and unentangled measurements. Thus, we arrive at the second main contribution of this work: a multiqubit entangled state can yield a statistical proof of the KS theorem with unentangled measurements if and only if it violates a Bell inequality with local projective measurements.

\subsection{Bell implies KS}
A probabilistic model is a probability assignment to the vertices of a contextuality scenario (i.e.~a hypergraph) such that the probabilities assigned in each hyperedge sum to one. Explicitly, given a contextuality scenario $H$ with vertices $V(H)$ and hyperedges $E(H)$, a probabilistic model on the contextuality scenario is a map $p:V(H)\rightarrow[0,1]$ such that $\sum_{v\in e}p(v)=1$ for all $e\in E(H)$. A KS-colouring is a probabilistic model that only takes values $0$ and $1$. A \emph{classical model} is a probabilistic model that can be decomposed into a convex combination of KS-colourings, where a convex combination, $q$, of probabilistic models $p$ and $p'$ is the probabilistic model $q(v)=\omega p(v)+(1-\omega)p'(v)$ for some $\omega\in[0,1]$ and all $v\in V(H)$. 

A probabilistic model $p$ is quantum if and only if for some separable Hilbert space $\cH$ there exists (i) a projection $\Pi_v$ for every $v\in V(H)$ such that $\sum_{v\in e}\Pi_v=\I_\cH$ for all $e\in E(H)$, and (ii) a density operator $\rho$ on $\cH$, such that $p(v)=\Tr(\Pi_v\rho)$. 

A quantum model that is not classical is said to provide a {\em statistical proof} of the KS theorem \cite{KCBS, KS18}. Such nonclassicality can be witnessed by the violation of a Bell-KS inequality that bounds the polytope of classical models, e.g.~the KCBS inequality \cite{KCBS}. On the other hand, if a contextuality scenario admits no classical models (hence no KS-colourings) but it admits a quantum model, we have a {\em logical proof} of the KS theorem \cite{KS67, KS15} which demonstrates a stronger form of KS-contextuality: not only does there exist a quantum model outside the set of classical models, the set of classical models is, in fact, empty, i.e.~no KS inequalities exist and {\em every} probabilistic model (hence every quantum model) on the contextuality scenario fails to be  classical.

A Bell scenario is an experiment in which $n$ parties each perform a measurement $x_r$ and observe an output $a_r$ for $1\leq r\leq n$ on some system such that the parties' individual experiments are space-like separated. We call a measurement on the entire system performed by all $n$ parties a global measurement. Each global measurement is specified by a list of the local measurement settings $\X=x_1x_2\ldots x_n$ and has a set of possible outcomes $\A|\X$, where $\A=a_1a_2\ldots a_n$. We consider the case in which each local measurement has two possible outcomes, i.e.~$a_r\in\{0,1\}$ for all $1\leq r\leq n$.  This is the natural setting for a multiqubit Bell experiment where the parties implement local projective measurements.

A \emph{behaviour} in a Bell scenario is a probability distribution, $p(\A|\X)$, on the measurement outcomes. A local behaviour is a behaviour that can be decomposed into a convex combination of local \emph{deterministic} behaviours,
i.e.~of behaviours $p(\A|\X)=p_1(a_1|x_1)\cdots p_n(a_n|x_n)$, where $p_r(a_r|x_r)\in\{0,1\}$ denotes the probability of party $r$ observing $a_r$ given they performed measurement $x_r$ for all $1\leq r\leq n$.

A behaviour that can be realised by each party performing a projective measurement on some subsystem of a quantum system is called a projective quantum behaviour. For a projective quantum behaviour, $p(\A|\X)$, there exists a separable Hilbert space, $\cH=\cH_1\otimes\cdots\otimes\cH_n$, a projective measurement $\{\Pi^{r,x_r}_{a_r}\}_{a_r}$ on $\cH_r$ for each measurement setting $x_r$ of the $r$-th party for all $1\leq r\leq n$, and a density operator $\rho$ on $\cH$ such that $p(\A|\X)=\Tr(\bigotimes_{r=1}^n\Pi^{r,x_r}_{a_r}\rho)$.

A Bell scenario can be mapped to a contextuality scenario which has (i) a vertex for every possible global measurement outcome $\A|\X$ for all $\A$ and $\X$, (ii) a hyperedge consisting of all the possible outcomes $\A|\X$ of fixed global measurement $\X$, i.e.~each set $\left\{(\A|\X)\middle|a_r\in\{0,1\}\text{ for all }1\leq r \leq n\right\}$ is a hyperedge, \emph{and} (iii) a hyperedge deriving from each no-signalling condition. The non-signalling hyperedges consist of sets of possible outcomes of (hypothetical) \emph{adaptive} measurements in which one party performs a measurement first and depending on their outcome a second party selects a measurement setting and so on \cite{AFL15}. For example, the hyperedge $\{00|00,01|00,10|01,11|01\}$ features in the hypergraph of the two party Bell scenario with binary inputs and outputs. This hyperedge can be thought of as the outcomes of an adaptive measurement protocol in which the first party always chooses measurement setting $0$, and the second party chooses the measurement setting $0$ if the first party obtains outcome $0$ and $1$ if they obtain $1$.\footnote{Note that the global measurements from the Bell scenario are also examples of adaptive measurements in which each party's measurement setting does not actually change based on the outcome of any other party. Thus all the hyperedges in a contextuality scenario representing a Bell experiment can be thought of as adaptive measurements.}

There is a bijection between local behaviours in the Bell scenario and classical models in the corresponding contextuality scenario. Furthermore, each quantum behaviour deriving from projective measurements in the Bell scenario defines a quantum model in the contextuality scenario. A collection of multiqubit projective measurements and a density operator that violate a Bell inequality then generate a quantum model on the corresponding contextuality scenario that is not a classical model and thus yields a statistical proof of the KS theorem. For a detailed account, see Ref.~\cite{AFL15}.

\subsection{KS implies Bell}
We have seen that states that can violate Bell inequalities also yield statistical proofs of the KS theorem with unentangled measurements. We now show that the converse relationship also holds. Namely, for any collection of product multiqubit rays and an entangled state which yield a statistical proof of the KS theorem, the entangled state necessarily violates a Bell inequality.

The proof will proceed along the following lines. Consider a hypergraph $H$ with a non-classical quantum model given by some multiqubit product rays, $\mathcal{S}$, and a density operator $\rho$. We will expand the set of multiqubit rays $\cS$ to a set of rays $\cS'$ that correspond to all the outcomes of a quantum measurement strategy in some Bell scenario, $\mathcal{B}(\mathcal{S})$.  For example, if $\cS$ consisted of $\{\ket{00},\ket{+1},\ket{0+}\}$ it would be expanded to
\begin{equation}
\begin{aligned}
\cS'=\{&\ket{00},\ket{01},\ket{10},\ket{11},
\\&\ket{+0},\ket{+1},\ket{-0},\ket{-1},
\\&\ket{0+},\ket{0-},\ket{1+},\ket{1-},
\\&\ket{++},\ket{+-},\ket{-+},\ket{--}\},
\end{aligned}
\end{equation}
which contains all the rays corresponding to the outcomes of measurements in a two-party Bell scenario where each party has two possible measurement settings given by $\{\ket{0},\ket{1}\}$ and $\{\ket{+},\ket{-}\}$.
We can then extend the hypergraph, $H$, to the hypergraph $G\supseteq H$ generated by this extended set of rays and their orthogonality relations. We can also extend the quantum model on $H$ given by $\cS$ and $\rho$ to a quantum model on $G$ given by $\cS'$ and $\rho$; this quantum model on $G$ continues to be non-classical. Let $H'$ be the hypergraph corresponding to the Bell scenario $\mathcal{B}(\cS)$. The hypergraph $G$ will contain all the vertices and hyperedges of the hypergraph $H'$ but possibly with additional hyperedges deriving from non-local bases in $\cS'$ such as the basis in Eq. \eqref{eq:nlbasis}. The set $\cS'$ and state $\rho$ also give a quantum model on the Bell hypergraph $H'$, since every hyperedge of $H'$ is contained in $G$. Finally, we will show, following \cite{augusiak2012tight}, that the sets of classical models on $G$ and $H'$ are identical despite the difference in hyperedges. Thus, the non-classical model on $G$ is also non-classical on $H'$, meaning $\rho$ yields a non-local, projective quantum behaviour in $\mathcal{B}(\cS)$. This leads us to the following theorem:

\begin{restatable}{thm}{bellks}\label{thm:bellks}
	Any multiqubit density operator, $\rho$, that can yield a statistical proof of the KS theorem with a finite set of unentangled projective measurements  can violate a Bell inequality with local projective measurements.
\end{restatable}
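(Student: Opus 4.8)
The plan is to turn the given statistical proof of the KS theorem into a Bell inequality violation by ``completing'' the product bases involved into a full Bell scenario. Since rank-one unentangled measurements suffice (as elsewhere in the paper), assume $\rho$ together with a finite set of product bases generates a contextuality scenario $H$ on which the quantum model $p(v)=\Tr([\psi_v]\rho)$ is not classical; write $\mathcal{S}=V(H)$ for the set of product rays involved. For each party $r$ and each single-qubit ray $\ket{\phi}$ occurring in the $r$-th tensor slot of some ray of $\mathcal{S}$, declare $\{\ket{\phi},\ket{\phi^\perp}\}$ a measurement setting of party $r$; this defines a finite $n$-party Bell scenario $\mathcal{B}(\mathcal{S})$ with two-outcome local settings. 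Its set of global-measurement outcomes $\A|\X$ is in bijection with the set $\mathcal{S}'$ of all product rays obtained by tensoring one outcome ray from each party (as in the worked example in the text); this is a genuine bijection because two distinct settings of a party are disjoint antipodal pairs of the Bloch sphere. Let $H'$ be the contextuality scenario associated with $\mathcal{B}(\mathcal{S})$ (the global-measurement hyperedges together with the hyperedges from adaptive/no-signalling measurements), and let $G$ be the scenario generated by $\mathcal{S}'$ and all orthogonality relations among its rays.

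First I would record the cheap inclusions. Every basis used in $H$ stays a basis of ${\C^2}^{\otimes n}$, hence a hyperedge of $G$, and $\mathcal{S}\subseteq\mathcal{S}'$, so $H$ is a subscenario of $G$; thus $p$ extends to the quantum model $(\mathcal{S}',\rho)$ on $G$, and this model is still non-classical, since restricting a classical model of $G$ to $V(H)$ gives a classical model of $H$. Likewise, every hyperedge of $H'$ — the global measurements (products of local bases) and the adaptive ones such as $\{00|00,01|00,10|01,11|01\}$ — is an honest orthonormal basis of ${\C^2}^{\otimes n}$ built from rays of $\mathcal{S}'$, hence a hyperedge of $G$; after identifying $V(H')=\mathcal{S}'=V(G)$, the scenarios $H'$ and $G$ have the \emph{same} vertex set, with $E(H')\subseteq E(G)$ (the inclusion typically strict, as $G$ also contains nonlocal product bases like Eq.~\eqref{eq:nlbasis}). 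Consequently the extended quantum model restricts to a probabilistic model on $H'$, manifestly realised by each party performing a rank-one local projective measurement on $\rho$, i.e.\ to a projective quantum behaviour $p(\A|\X)$ in $\mathcal{B}(\mathcal{S})$.

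The crux, following \cite{augusiak2012tight}, is that $G$ and $H'$ have the \emph{same} polytope of classical models. One inclusion is free: $G$ has more hyperedges, so every KS-colouring of $G$ is a KS-colouring of $H'$, hence every classical model of $G$ is one of $H'$. For the converse I would show that every KS-colouring of $H'$ is already a KS-colouring of $G$. Under the standard bijection, a KS-colouring of $H'$ is an $n$-tuple of local deterministic responses $f_1,\dots,f_n$, where $f_r$ picks one ray out of each of party $r$'s settings; since those settings are pairwise-disjoint antipodal pairs, $f_r$ extends to a full single-qubit KS-colouring $c_1^{(r)}$ of the rays of $\C^2$ (colour the unused antipodal pairs arbitrarily, one $1$ per pair). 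Now the proof of Lemma~\ref{lem:north}, together with its base case Lemma~\ref{lem:2q}, uses only the fact that for each party every antipodal pair contains exactly one distinguished ray; it therefore carries over verbatim to show that $c(\ket{\phi_1}\cdots\ket{\phi_n}):=\prod_{r=1}^n c_1^{(r)}(\ket{\phi_r})$ is a KS-colouring of the scenario of \emph{all} $n$-qubit product rays, of which $G$ is a subscenario. (This is the general version of the colouring $c_n$ used in Theorem~\ref{thm:prod}, allowing a different ``north'' convention for each qubit.) Restricted to $\mathcal{S}'$, the colouring $c$ reproduces the given KS-colouring of $H'$, so the latter is a KS-colouring of $G$, and the two classical polytopes coincide.

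Putting it together: the quantum model $(\mathcal{S}',\rho)$ is non-classical on $G$, hence — by the equality of polytopes — also non-classical on $H'$; via the bijection between classical models of $H'$ and local behaviours of $\mathcal{B}(\mathcal{S})$, the projective behaviour $p(\A|\X)$ produced by $\rho$ is non-local, and since $\mathcal{B}(\mathcal{S})$ is finite this behaviour violates one of the finitely many facet (Bell) inequalities of the local polytope with local projective measurements. I expect the main obstacle to be exactly the polytope-equality step, and inside it the point that a \emph{local} deterministic behaviour, read as a $\{0,1\}$-labelling of $\mathcal{S}'$, automatically respects the extra ``nonlocality-without-entanglement'' hyperedges of $G$ — which is precisely what the product-colouring generalisation of Lemma~\ref{lem:north} delivers. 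A subsidiary thing to get right is the bookkeeping that $\A|\X\mapsto$ product ray is a true bijection $V(H')\to\mathcal{S}'$, so that ``same vertex set'' is literal and the models transport between $G$ and $H'$ without ambiguity.
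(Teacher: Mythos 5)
Your proposal is correct and follows the paper's overall architecture (the Bell scenario $\mathcal{B}(\mathcal{S})$, the completed ray set $\mathcal{S}'$, the hypergraphs $H'$ and $G$ with $V(G)=V(H')$ and $E(H')\subseteq E(G)$, and the reduction to showing the classical models of $G$ and $H'$ coincide), but it handles the crucial polytope-equality step by a genuinely different argument. The paper follows Ref.~\cite{augusiak2012tight}: any two events in an extra hyperedge $e\in E(G)\setminus E(H')$ are \emph{locally orthogonal}, so every local deterministic behaviour satisfies $\sum_{v\in e}p(v)\leq 1$, and this inequality is then shown to be saturated by \emph{all} local behaviours because the uniform behaviour---an interior point of the local polytope---saturates it, so equality extends to the whole affine span (hence to all non-signalling behaviours). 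You instead argue directly that every KS-colouring of $H'$, i.e.\ every local deterministic behaviour, extends to a product colouring of all $n$-qubit product rays built from one single-qubit colouring per party, and then invoke a per-qubit generalization of Lemma~\ref{lem:north} to conclude it assigns exactly one $1$ to every hyperedge of $G$. That generalization is legitimate: the proofs of Lemmas~\ref{lem:2q} and~\ref{lem:north} use only that each antipodal pair of each qubit contains exactly one distinguished ray, so they carry over to a different ``north'' convention on each qubit (the paper already notes the argument is insensitive to the choice of $c_1$, albeit for a common choice, so your claim deserves the one-line check you give rather than ``verbatim''). The trade-off: your route is more combinatorial and reuses the paper's central lemma, avoiding the polytope-geometry step entirely; the paper's route is independent of Lemma~\ref{lem:north} and yields the extra structural fact that the ``nonlocality-without-entanglement'' hyperedges only impose Bell inequalities that are trivially saturated by every local (indeed every non-signalling) behaviour. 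Two pieces of bookkeeping you partly flag and should finish: deduplicate party $r$'s settings when both $\ket{\phi}$ and $\ket{\phi^\perp}$ occur in slot $r$ (as in the paper's definition of $\mathcal{S}_r$), so distinct settings are disjoint antipodal pairs and $\A|\X\mapsto\ket{(\A|\X)}$ is a genuine bijection; and note that a $\{0,1\}$-valued model on $H'$ factorizes into local deterministic responses precisely because the no-signalling hyperedges force determinism to respect no-signalling---the standard fact underlying the bijection with local behaviours that the paper also invokes via Ref.~\cite{AFL15}.
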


A full proof of Theorem \ref{thm:bellks} is given in Appendix \ref{proof:bellks}.

It follows that it must be possible to extend our KS-noncontextual ontological model to include, besides separable states, multiqubit entangled states (e.g.~Werner states) that cannot yield Bell violations when subjected to  local projective measurements. We leave the construction of such a multiqubit KS-noncontextual ontological model as an open problem for future work.
\section{Do we need fully entangled bases?}\label{sec:fullyent}
Our initial question concerning the necessity of entanglement in any multiqubit logical proof of the KS theorem \cite{KS67} was motivated by the presence of entanglement in the Peres-Mermin magic square (Fig.~\ref{peresmermin}). Another curious feature of the Peres-Mermin argument is the appearance of bases that are not merely entangled but, in fact, {\em fully entangled}, i.e.~these bases contain no product projections. Is there, then, an even stronger requirement on the entanglement needed for a multiqubit KS theorem, i.e.~just as the presence of entangled projections in a KS set is generic and not particular to the Peres-Mermin case, must a multiqubit KS set always contain fully entangled bases? Here we rule out this possibility by explicitly constructing a two-qubit KS set that does not require fully entangled bases, i.e.~every basis contains at least one product ray. 

Our construction makes use of Peres' 33-ray proof of the KS theorem in $\C^3$ which, when represented as a contextuality scenario, actually requires 57 rays (since the proof makes use of the orthogonality of certain pairs of the 33 rays for which the final orthogonal ray is missing). We construct a KS set by embedding two copies of the 57 Peres rays in two different three dimensional subspaces of $\C^2\otimes\C^2$. 

Denote by $\ket{\phi_j}$, for integers $1\leq j\leq57$, the rays in the Peres 57-ray contextuality scenario. The rays can be carved up into 40 distinct orthonormal bases.

We will now embed Peres' three-dimensional rays in the three-dimensional subspace of the two-qubit Hilbert space orthogonal to the ray $\ket{00}$. Given a ray $\ket{v}=\alpha\ket{0}+\beta\ket{1}+\gamma\ket{2}\in \cR(\C^3)$, denote $\ket{v^{00}}=\alpha\ket{01}+\beta\ket{10}+\gamma\ket{11}\in \cR(\C^2\otimes\C^2)$. For each of the 40 bases $\left\{\ket{\phi_a},\ket{\phi_{b}},\ket{\phi_{c}}\right\}$ among the Peres rays, we add the basis 

\begin{equation}\label{basistype1}
	\left\{\ket{00},\ket{\phi_a^{00}},\ket{\phi_b^{00}},\ket{\phi_c^{00}}\right\},
\end{equation}
to our hypergraph.

We now perform the analogous operation in the subspace orthogonal to $\ket{01}$, but we first transform Peres' rays by a unitary matrix
\begin{equation}U=\frac{1}{3}\left(
	\begin{array}{ccc}
		 1+\sqrt{2} & 
		\tau_-&
		\tau_+\\
		\tau_+ & 
		1+\sqrt{2} &
		\tau_-\\
		\tau_- & 
		\tau_+ & 
		 1+\sqrt{2} \\
	\end{array}
	\right),
\end{equation}
where $\tau_\pm=\left(2-\sqrt{2}\pm\sqrt{6}\right)/2$, and denote the new set of rays $\ket{U\phi_j}$. Note that the orthogonality relations between the rays are preserved under this transformation. Explicitly, we add the basis 
\begin{equation}\label{basistype2}
	\left\{\ket{01},\ket{U\phi_a^{01}},\ket{U\phi_b^{01}},\ket{U\phi_c^{01}}\right\},
\end{equation}
to our hypergraph for every basis $\left\{\ket{\phi_a},\ket{\phi_{b}},\ket{\phi_{c}}\right\}$, where analogously $\ket{v^{01}}=\alpha\ket{00}+\beta\ket{10}+\gamma\ket{11}\in\cR(\C^2\otimes\C^2)$ for any $\ket{v}=\alpha\ket{0}+\beta\ket{1}+\gamma\ket{2}\in\cR(\C^3)$.

The unitary $U$ has been chosen such that $\braket{\phi^{00}_j|U\phi^{01}_k}\neq0$ for all entangled $\ket{\phi^{00}_j}$ and $\ket{U\phi^{01}_k}$. It follows that the entangled rays of the hypergraph cannot be formed into a basis, i.e.~there are no fully entangled bases in our construction. This fact can be verified by consulting the Mathematica notebook available at Ref.~\cite{nb}. 
%

The contextuality scenario can now be seen to be KS-uncolourable as follows. Any KS-colouring, $c$, of our hypergraph should assign zero to at least one of the rays $\ket{00}$ and $\ket{01}$, since they appear in a hyperedge---the hyperedge consisting of $\left\{\ket{00},\ket{01},\ket{10},\ket{11}\right\}$ which is one of the bases given by Eq.~\eqref{basistype1}. If $\ket{00}$ is assigned zero by the colouring, $c$, then the assignments to the rays $\ket{\phi^{00}_j}$ would constitute a KS-colouring of the Peres hypergraph, which does not exist. Similarly, if $\ket{01}$ is assigned zero in the colouring, $c$, then the assignments to the rays $\ket{U\phi^{01}_j}$ would also constitute a KS-colouring of the Peres hypergraph. Hence such a colouring, $c$, cannot exist.

\section{Unentangled Kochen--Specker sets}\label{sec:unentks}

Between the case of multiqubit systems (wherein each subsystem has dimension two) and the case of multiqudit systems, wherein each subsystem has dimension at least three, we have the possibility of multiqudit systems consisting of both qubits and higher-dimensional qudits. For these systems, Wallach showed that unentangled projections are still  insufficient to yield Gleason's theorem \cite{Wallach02}. Is it, however, possible to obtain the KS theorem with unentangled projections in this case, unlike the multiqubit case? In this section, we show that this is indeed possible. In fact, the presence of just one qutrit in an otherwise multiqubit system is enough to allow constructions of KS sets with unentangled projections. Our argument, not surprisingly, relies on the fact that a single qutrit admits KS sets on its own \cite{KS67}.
\begin{restatable}{thm}{unentd}\label{unentd3}
	There exists a KS set consisting entirely of product rays in any separable Hilbert space $\cH_1\otimes\cdots\otimes\cH_n$ where $\dim(\cH_j)\geq3$ for some $1\leq j \leq n$.
\end{restatable}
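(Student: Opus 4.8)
The plan is to leverage the existence of a finite KS set in $\C^3$ (the original Kochen–Specker construction \cite{KS67}, or Peres' smaller one) and ``lift'' it to the composite space by taking product rays that act nontrivially only on the qutrit factor $\cH_j$ while being fixed on the remaining factors. Without loss of generality, relabel so that $j=1$, i.e.~$\cH_1$ has dimension at least three; pick a three-dimensional subspace $\mathcal{V}\subseteq\cH_1$ spanned by orthonormal vectors $\ket{0},\ket{1},\ket{2}$, and on each other factor $\cH_k$ ($k\geq 2$) fix a reference unit vector $\ket{e_k}$ together with a completion to an orthonormal basis $\{\ket{e_k},\ket{f_k^{(2)}},\ldots\}$. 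The idea is that a KS-colouring of the composite scenario, when restricted to rays of the form $\ket{v}\otimes\ket{e_2}\otimes\cdots\otimes\ket{e_n}$ with $\ket{v}\in\mathcal{R}(\mathcal{V})$, must induce a KS-colouring of the qutrit scenario living in $\mathcal{V}$, which is impossible by Theorem~\ref{thm:KS} applied to the finite qutrit KS set.

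The key steps, in order: (i) Fix the finite qutrit KS set $\{\ket{\phi_a}\}\subseteq\mathcal{R}(\C^3)$, represented as a contextuality scenario whose hyperedges are the complete orthonormal triples among the $\ket{\phi_a}$ (adding any ``missing'' third rays of orthogonal pairs, as in Sec.~\ref{sec:fullyent}, so that every orthogonality relation used sits inside a genuine basis). (ii) Embed each such triple $\{\ket{\phi_a},\ket{\phi_b},\ket{\phi_c}\}$ of $\mathcal{V}$ into a full basis of $\cH_1$ by appending an orthonormal basis $\{\ket{g_4},\ket{g_5},\ldots\}$ of $\mathcal{V}^\perp$ inside $\cH_1$; then tensor the whole $\cH_1$-basis with the fixed product basis $\{\ket{e_2},\ldots\}\otimes\cdots\otimes\{\ket{e_n},\ldots\}$ of $\cH_2\otimes\cdots\otimes\cH_n$ to obtain a complete set of product rays of $\cH_1\otimes\cdots\otimes\cH_n$. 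Collect all such product bases (one family per qutrit basis) into a hypergraph $G$; all its vertices are product rays by construction. (iii) Argue KS-uncolourability: suppose $c$ is a KS-colouring of $G$. Every basis of $G$ splits into the ``lifted'' rays $\ket{\phi_x}\otimes\ket{e_2}\otimes\cdots$ and the ``padding'' rays (those involving some $\ket{g_i}$ or some $\ket{f_k^{(m)}}$). One shows $c$ must assign $0$ to every padding ray, forcing $c$ restricted to the lifted qutrit rays to be a KS-colouring of the qutrit scenario — a contradiction. (iv) Conclude that $G$ is a KS set of product rays, proving the theorem.

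The main obstacle is step (iii): ruling out colourings that ``cheat'' by putting the value $1$ on a padding ray. This requires a bit of care, because a single padding ray may appear in several of the lifted bases, and one must ensure the padding vectors form a configuration that itself cannot absorb the colour. The clean fix is to choose the padding carefully so that the padding rays are forced to $0$. One natural way: use the \emph{same} orthonormal basis $\{\ket{g_4},\ket{g_5},\ldots\}$ of $\mathcal{V}^\perp$ and the \emph{same} completions $\{\ket{f_k^{(m)}}\}$ of each $\cH_k$ across all bases in $G$; then each product ray that involves at least one padding vector lies in a basis whose \emph{complementary} rays can be arranged (by also adding the single extra basis obtained from the fixed full product basis of $\cH_1\otimes\cdots\otimes\cH_n$ restricted to $\mathcal{V}^\perp\otimes(\cdots)$ together with $\mathcal{V}$) to pin it down — concretely, pick one lifted qutrit basis, note the three lifted rays already use up the ``budget'' of $1$ in the bigger basis once we know the qutrit scenario is uncolourable, hence no room is left for padding rays. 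If subtleties persist, an even more robust route is to first prove the statement for $n$ such that all but one factor is a qubit by invoking Theorem~\ref{thm:main}'s machinery on the qubit factors (every product basis of the qubit part contains exactly one all-north ray by Lemma~\ref{lem:north}), so that a colouring of $G$ reduces to colouring the qutrit factor alone; the general multiqudit case then follows by the same reduction, choosing the ``north'' analogue to be any fixed $\{0,1\}$-assignment that picks exactly one ray per basis on each $d$-dimensional factor with $d\geq 3$ being replaced by the genuine qutrit obstruction. I expect the clean writeup to combine the fixed-reference-vector trick with the missing-ray completion device of Sec.~\ref{sec:fullyent}, and the only real content is checking that padding rays are forced to colour $0$.
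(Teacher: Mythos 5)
Your construction is close in spirit to the paper's (direct products of a KS set on the large factor with fixed bases on the remaining factors), but your step (iii) is a genuine gap, and as stated your ``clean fix'' actually breaks the construction. With the padding held fixed across all bases, every composite basis in your hypergraph $G$ contains the \emph{same} ray $\ket{g_4}\otimes\ket{e_2}\otimes\cdots\otimes\ket{e_n}$ whenever $\dim\cH_1>3$ (and nothing prevents analogous cheats built from the fixed completions $\ket{f_k^{(m)}}$ unless they are excluded by further bases). Assigning the value $1$ to that single fixed padding ray and $0$ to everything else is then a perfectly valid KS-colouring of $G$: each hyperedge contains exactly one $1$ and no orthogonality constraint is violated. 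So padding rays are \emph{not} forced to $0$, and $G$ is not even a KS set in that case. The argument you sketch to pin them down is also circular: ``once we know the qutrit scenario is uncolourable, no room is left for padding rays'' presupposes that the $1$ in each composite basis lands on a lifted ray, which is exactly what you are trying to prove (indeed, uncolourability of the qutrit scenario says the $1$ \emph{cannot} always land on lifted rays, so any complete proof must confront the case where a padding ray receives $1$). Your fallback via Lemma~\ref{lem:north} has the same problem: that lemma constrains the structure of product bases, not where a hypothetical colouring places its $1$s, so it cannot by itself ``reduce the colouring to the qutrit factor.''

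The missing idea, which is how the paper's Appendix~\ref{thmpf} proceeds, is to \emph{marginalize} rather than to force padding to zero. Take direct product bases $W^k=\{\,\ket{w_j}\otimes\ket{v_l}\,\}$ built from a full basis $W=\{\ket{w_j}\}$ of the auxiliary factor and the bases $V^k$ of a KS set on the factor of dimension at least three, and suppose $c$ were a KS-colouring of these. Define $c'(\ket{v_l})=\sum_j c(\ket{w_j}\otimes\ket{v_l})$. Mutual orthogonality of $\{\ket{w_j}\otimes\ket{v_l}\}_j$ gives $c'\in\{0,1\}$; if $\braket{v_l|v_{l'}}=0$ then all of $\{\ket{w_j}\otimes\ket{v_l},\ket{w_j}\otimes\ket{v_{l'}}\}_j$ are mutually orthogonal, so $c'(\ket{v_l})+c'(\ket{v_{l'}})\le 1$; and the single $1$ that $c$ places in each $W^k$ forces $c'$ to place exactly one $1$ in $V^k$. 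Thus $c'$ would KS-colour a KS set, a contradiction, and induction over the factors finishes the proof. This bypasses the padding problem entirely (there is nothing to force to zero), and it uses the full basis $W$ of the auxiliary factor rather than a single reference vector $\ket{e_k}$, which is why it works for all $\dim\cH_j\ge 3$ and not just $\dim\cH_j=3$. To repair your writeup you would either have to adopt this marginalization step or vary the padding/bases enough to kill the cheat colouring above---the latter essentially reproduces the paper's construction anyway.
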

Here we give an idea of the proof with an example and give the general proof in Appendix~\ref{thmpf}. We will show that if there were no KS set consisting of product rays in $\cH_1\otimes\cdots\otimes\cH_n$, i.e.~if there existed a KS-colouring $c$ on the product rays in $\cH_1\otimes\cdots\otimes\cH_n$, then we would be able to use $c$ to define a KS-colouring $c'$ on any set of bases of $\cH_j$, contradicting the KS theorem. For example, assume $c$ is a KS-colouring of the product rays in $\C^2\otimes\C^3$. Now take any set of bases $\{B_1,\ldots,B_N\}$ in $\C^3$, for example the bases from the Peres 33-vector proof ($57$ vectors in $40$ bases), and take the element-wise tensor product of each basis $B_k$ with the basis $\{\ket{0},\ket{1}\}$ of $\C^2$. The resulting set of bases, $\left\{\{\ket{0},\ket{1}\}\otimes B_k\right\}_k$, of $\C^2\otimes\C^3$ would be KS-colourable by the definition of $c$. Now given a basis $B_k$ we define a map $c'$ on $B_k$ by 
\begin{equation}
c'(\ket{\psi})=\begin{cases} 1 &\text{if } c(\ket{0}\ket{\psi})=1 \text{ or } c(\ket{1}\ket{\psi})=1\\ 0 &\text{otherwise.}\end{cases}
\end{equation}
For example, given the basis $\{\ket{0},\ket{1},\ket{2}\}$, if $c(\ket{02})=1$ then $c'(\ket{2})=1$ and $c'(\ket{0})=c'(\ket{1})=0$. By this definition and the fact that $c$ assigns one to exactly one element of $\{\ket{0},\ket{1}\}\otimes B_k$ we find that, likewise, $c'$ assigns one to exactly one element of $B_k$. Then we see that $c'$ is well-defined across all the elements of the bases $B_k$ since it is defined independently from the basis $B_k$. The map $c'$ would therefore be a KS-colouring of the bases $\{B_1,\ldots,B_N\}$, contradicting the result of Peres.

Note that the proof of Theorem~\ref{unentd3} in Appendix \ref{thmpf} actually shows the stronger result that the hypergraph of product rays of $\cH_1\otimes\cdots\otimes\cH_n$ with only the hyperedges deriving from \emph{direct} product bases is also KS-uncolourable. A direct product basis \cite{mcnulty2016mutually} is a basis formed by taking the elementwise tensor product of a basis for each subsystem $\cH_j$, i.e.~a basis
\begin{multline}\label{eq:direct}
\{\ket{\psi^1_1}\ket{\psi^2_1}\cdots\ket{\psi^n_1},\ket{\psi^1_2}\ket{\psi^2_1}\cdots\ket{\psi^n_1},\\\ldots,\ket{\psi^1_{N_1}}\ket{\psi^2_{N_2}}\cdots\ket{\psi^n_{N_2}}\}
\end{multline} 
where $\{\ket{\psi^j_1},\ldots,\ket{\psi^j_{N_j}}\}$ is a basis for each $\cH_j$ for $1\leq j\leq n$.

The contrast between the \textit{impossibility} of Gleason's theorem and the \textit{possibility} of KS theorem with unentangled measurements in the case of composite systems that contain both qubits and higher-dimensional qudits may seem surprising at first glance. To gain some intuition for this contrast, consider two important facts:  first, very simply, Gleason's theorem implies the KS theorem, but not conversely. Second, in more depth, the existence of a KS theorem but the lack of a corresponding Gleason's theorem reflects the fact that the former is a \textit{no-go theorem} (ruling out certain types of probabilistic assignments) while the latter is a \textit{``go theorem"} (specifying the allowed probabilistic assignments).

Consider, for example, a qubit-qutrit system. A single qubit does not permit a proof of Gleason's theorem because the structure of projective measurements on a qubit allows for many more probabilistic assignments than are dictated by the Born rule. Composing the qubit with a qutrit and considering unentangled measurements on the pair does not rule out these non-quantum assignments on the qubit, so Gleason's theorem continues to fail. On the other hand, a single qutrit admits a proof of the KS theorem and this no-go statement goes through even if we compose it with a qubit.

\section{Implications for the role of contextuality in quantum computation}\label{sec:qcomp}
The role of contextuality within the paradigm of quantum computation with state injection (QCSI) has been a subject of active research in recent years \cite{HWV14, BDB17}.\footnote{Although we focus on KS-contextuality in this paper, recent results \cite{SDS21} have shown connections between QCSI and generalised contextuality \cite{Spekkens05}, analogous to the conclusions of Ref.~\cite{HWV14} with respect to KS-contextuality.} This approach to quantum computation relies on lifting stabiliser quantum circuits, which cannot implement universal quantum computation, to universality via the injection of non-stabiliser states called {\em magic states}. It has been shown that, in the case of odd-prime dimensional qudit (or, {\em quopit} \cite{BDB17}) circuits, contextuality is a necessary resource for quantum computation, i.e.~the magic states needed for universality must exhibit contextuality with respect to stabiliser measurements \cite{HWV14}. Thus, if a state admits a KS-noncontextual ontological model with the stabiliser measurements, it cannot promote the circuit to universality. The converse claim, that (KS-)contextuality is sufficient for universal quantum computation, is conjectured but unproven \cite{HWV14}.

The multiqubit case (i.e.~the even-prime dimensional case) has, however, been a hurdle in interpreting contextuality of magic states as a resource for quantum computation \cite{HWV14}. The multiqubit stabiliser subtheory is classically efficiently simulable \cite{GK98, AG04} despite the presence of (state-independent) KS-contextuality in its observables (e.g.~see Fig.~\ref{peresmermin}). The fact that {\em any} state, stabiliser or not, exhibits contextuality with respect to such observables means that there is nothing special about the contextuality of magic states that renders contextuality a necessary resource for universal quantum computation. While magic states are still a necessary resource, their contextuality does not single them out (unlike the quopit case) since {\em all} states can exhibit contextuality.\footnote{This also means that, unlike the quopit case, contextuality cannot even be {\em conjectured} as a sufficient condition for universality in this case.} To overcome this hurdle, {\em restricted} QCSI schemes have been proposed \cite{BDB17} which restore the status of contextuality as a resource for quantum computation.

In Ref.~\cite{BDB17}, such a restricted QCSI scheme $\mathcal{M}_{\mathcal{O}}$ is required to satisfy:

\begin{enumerate}
	\item[(C1)] \emph{Resource character.} There exists a quantum state that
	does not exhibit contextuality with respect to measurements available in $\mathcal{M}_{\mathcal{O}}$.
	\item[(C2)] \emph{Tomographic completeness.} For any
state $\rho$, the expectation value of any Pauli observable can be inferred via the allowed operations of the scheme.
\end{enumerate}

The requirement (C1) means that the measurements in the scheme cannot exhibit state-independent contextuality. It follows from Theorem~\ref{thm:main} of the present manuscript that a sufficient condition for satisfying requirement (C1) is that every measurement in the scheme be unentangled. Specifically, the scheme $\mathcal{M}_{\mathcal{O}}$ prescribes the sets of observables in the scheme that are jointly measurable. If none of these joint measurements requires a measurement in an entangled basis then (C1) is satisfied.

The schemes proposed in Ref.~\cite{BDB17} satisfy exactly this condition; they contain no entangled measurements. Thus, all the contextuality in these schemes derives from entanglement of the injected state. Furthermore, it follows from Theorem~\ref{thm:bellks} that in order for the injected state to promote such a scheme to universality it must be capable of violating a Bell inequality with some local projective measurements.

\section{Conclusions}\label{sec:disc}

In this work we have demonstrated the necessity of entanglement in proofs of the KS theorem for systems of multiple qubits, i.e., KS-contextuality necessitates not only incompatibility but also entanglement in multiqubit systems.

On the one hand, we showed KS sets for multiqubit systems must contain entangled rays/projections. It follows that any \emph{logical} proof of the KS theorem for multiqubit systems relies upon entanglement in the measurements, just like in the case of the Peres-Mermin square (Fig.~\ref{peresmermin}). However, unlike the Peres-Mermin square, multiqubit proofs of KS-contextuality do not, in general, require measurements in {\em fully entangled} bases.

On the other hand, the KS-noncontextual ontological model defined in Sec.~\ref{sec:model} allows us to go further and also make statements about \emph{statistical} (and state-dependent) proofs of the KS theorem, i.e.~those proofs that are in the style of Klyachko {\em et al.}~\cite{KCBS}. Specifically, this model can be easily extended to include unentangled projective measurements and separable states of a multiqubit system. Thus, any statistical proof of the KS theorem for such a system must employ entanglement, either in the state or in the measurements (or both). For example, proofs of Bell's theorem give rise to statistical proofs of the KS theorem in which the measurements are unentangled but the state is necessarily entangled  \cite{CSW14, Kunjwal19}.

Our results also allow us to make the following comparisons with other forms of nonclassicality. 
	
Firstly, it follows from our results that the nonclassicality present in unentangled measurements in the form of  `nonlocality without entanglement' \cite{BDF99} is insufficient to witness the (KS-)contextuality of multiqubit systems.

Secondly, Bell's theorem follows from unentangled measurements and requires an  entangled state (although entanglement is not sufficient \cite{Werner89}). Similarly, it follows from the model in Sec.~\ref{sec:model} that statistical proofs of the KS theorem employing unentangled measurements also require entangled states for multiqubit systems. Moreover, in Theorem~\ref{thm:bellks}, we have shown that the subset of entangled sets that violate a Bell inequality with local projective measurements are exactly those which can yield a finite statistical proof of the KS theorem, where by `finite' we mean that the proof uses a finite set of measurements.

Finally, our results clarify the connection between Gleason's theorem and the KS theorem with respect to entanglement. While both theorems hold in dimensions greater than two and fail to hold in dimension two, their behaviour with respect to entanglement differs in multiqudit systems. In any multiqudit system that contains both, a subsystem of dimension two and another of dimension at least three, the two theorems diverge.  Unentangled measurements are sufficient for a proof of the KS theorem but not Gleason's theorem for such multiqudit systems (Fig.~\ref{unentksgleason}).

We have also discussed the implications of our results for the program of understanding the role of contextuality in quantum computation. We find that the assumptions underlying some previously proposed QCSI schemes with qubits \cite{BDB17} become more intuitive in view of our results. 
	
The questions we have raised and addressed in this paper have implications for both fundamental and applied aspects of quantum theory. Further development of the applied aspects, particularly with respect to quantum computation, will be taken up in future work.

\section*{Acknowledgements}
We thank Tobias Fritz for invaluable discussions in the early stages of this work. This project has received funding from the European Union's Horizon 2020 research and
innovation programme under the Marie Sk{\l}odowska-Curie grant agreement No. 754510. VJW acknowledges support from the Government of Spain (FIS2020-TRANQI and Severo Ochoa CEX2019-000910-S), Fundaci\'o Cellex, Fundaci\'o Mir-Puig, Generalitat de Catalunya (CERCA, AGAUR SGR 1381 and QuantumCAT).  RK is supported by
the Charg\'e des recherches fellowship of the Fonds de la
Recherche Scientifique-FNRS (F.R.S.-FNRS), Belgium. This research was partly supported
by Perimeter Institute for Theoretical Physics through their visitor program. Research at Perimeter Institute is supported by the Government of Canada through the Department of Innovation, Science
and Economic Development Canada and by the Province of Ontario through the Ministry of Research, Innovation and Science.

\begin{appendix}
	\section{Traditional KS}\label{app:trad}
	
		A traditional proof of the KS theorem comprises a set of self-adjoint operators on a Hilbert space that lead to a contradiction when one attempts to find a \emph{valuation} (see Def.~\ref{def:val} below) on all such self-adjoint operators \cite{KS67,redhead1987incompleteness}. A subset of these operators that commute pairwise is known as a context, since all the observables in the subset are jointly measurable. In finite dimensions, and in particular for multiqubit systems, the operators in a context have a simultaneous orthonormal eigenbasis, i.e.~an orthonormal basis in which every vector is an eigenvector of every operator in the context. In this appendix we will describe the relationship between the existence of valuations and KS-colourings. Thus, we demonstrate how Theorem \ref{thm:main} implies that for multiqubit systems a set of operators leading to a traditional proof of the KS theorem must contain a context for which the simultaneous orthogonal eigenbasis contains entangled vectors. 
		
Let $\mathcal{L}_{\operatorname{sa}}(\mathcal{H})$ denote the self adjoint operators of a separable Hilbert space $\cH$.	
	\begin{defn}\label{def:val}
		A valuation is a map $v:\mathcal{L}_{\operatorname{sa}}(\mathcal{H})\rightarrow\mathbb{R}$ satisfying:
		\begin{itemize}[leftmargin=4em]
		\item[(SPEC)] $v(A)\in\sigma(A)$, where $\sigma(A)$ denotes the spectrum of $A$
		\item[(FUNC)] $v(g(A))=g(v(A))$ for any Borel function $g$.
		\end{itemize}
	\end{defn}
The requirement (FUNC) implies, for example, that for commuting operators $A$ and $B$ we have $v(AB)=v(A)v(B)$ and $v(aA+bB)=av(A)+bv(B)$ for $a,b\in\mathbb{R}$. Therefore, if there did exist a valuation, $v$, on $\mathcal{L}_{\operatorname{sa}}(\mathcal{H})$ it would assign values zero or one to each rank-one projection and satisfy $v(\Pi_1)+v(\Pi_2)+\ldots=1$ for sets of orthogonal projections $\{\Pi_1,\Pi_2,\ldots\}$ summing to the identity. In other words, the valuation would define a KS-colouring when restricted to the rank-one projections of the Hilbert space. The non-existence of such a colouring, stated in Theorem~\ref{thm:KS}, therefore implies the non-existence of a valuation. 
	
In the case of multiqubit product rays we have found that KS-colourings do, however, exist. 
We may use these colourings to define valuations on certain subsets of self-adjoint operators.

We define a traditional proof of the KS theorem as a set $\mathcal{A}$ of self-adjoint operators such that there does not exist a map $v$ on $\mathcal{A}$ satisfying (SPEC) and (FUNC), where (FUNC) is limited to Borel functions $g$ such that $g(A)\in\mathcal{A}$.\footnote{Note that under this definition the set of nine operators in the Peres-Mermin square does \emph{not} produce a KS contradiction, however, if one adds the identity and negative identity to the set then the standard contradiction can be found.} We find that such a set $\mathcal{A}$ must contain contexts for which the simultaneous orthogonal eigenbasis contains entangled vectors in multiqubit systems. 

	\begin{lem}\label{lem:val} Consider a set of self-adjoint operators $\mathcal{A}$ on ${\C^2}^{\otimes n}$ such that the simultaneous eigenbases for each context in $\mathcal{A}$ consist entirely of product vectors. Then $\mathcal{A}$ does not form a traditional proof of the KS theorem.
	\end{lem}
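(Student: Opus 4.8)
The plan is to use the KS-colouring $c_n$ on $n$-qubit product rays (constructed in Theorem~\ref{thm:prod} via Lemma~\ref{lem:north}) to manufacture a valuation on $\mathcal{A}$, thereby showing $\mathcal{A}$ cannot be a traditional proof of the KS theorem. The essential point is that the hypothesis — every context in $\mathcal{A}$ has a simultaneous eigenbasis of product vectors — means that all the relevant spectral projections of operators in $\mathcal{A}$ are built from product rays, precisely the rays on which $c_n$ is defined.

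First I would fix the KS-colouring $c_n$ on the product rays of ${\C^2}^{\otimes n}$ and extend it additively to all projections that are sums of mutually orthogonal product rays appearing in some context of $\mathcal{A}$: if $\Pi = [\psi^{i_1}] + \cdots + [\psi^{i_k}]$ with the $\ket{\psi^{i_\ell}}$ drawn from a common product eigenbasis of a context, set $v(\Pi) = \sum_\ell c_n(\ket{\psi^{i_\ell}})$, which is $0$ or $1$ since $c_n$ assigns $1$ to exactly one ray per complete set (Lemma~\ref{lem:north}). Next, for a general self-adjoint $A \in \mathcal{A}$ lying in a context with product eigenbasis $\{\ket{\psi_1},\dots,\ket{\psi_{2^n}}\}$ and spectral decomposition $A = \sum_j a_j [\psi_j]$, define $v(A) = \sum_j a_j\, c_n(\ket{\psi_j})$; since exactly one $c_n(\ket{\psi_j})$ equals $1$, this picks out a single eigenvalue $a_{j_0}$, so (SPEC) holds automatically. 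One must check $v$ is well-defined: an operator may belong to several contexts, but its spectral projections are intrinsic to $A$, and $c_n$ evaluated on the (product) eigenrays of a fixed projection is context-independent by construction (this is exactly the well-definedness already exploited in Theorem~\ref{thm:prod} and in the model of Sec.~\ref{sec:model}); degenerate eigenspaces are handled by the additive extension above, and I would note that if two operators share a context they share a simultaneous product eigenbasis, so the chosen ``$1$''-ray is common to both.

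Then I would verify (FUNC) restricted to Borel functions $g$ with $g(A) \in \mathcal{A}$: if $A = \sum_j a_j[\psi_j]$ then $g(A) = \sum_j g(a_j)[\psi_j]$ over the same product eigenbasis, so $v(g(A)) = \sum_j g(a_j) c_n(\ket{\psi_j}) = g(a_{j_0}) = g(v(A))$. For two commuting operators $A,B$ in a common context with shared product eigenbasis, the same single ray $\ket{\psi_{j_0}}$ is selected for both, giving $v(AB) = v(A)v(B)$ and linearity on that context; this covers the functional-relation constraints a traditional proof would invoke. Hence $v$ satisfies (SPEC) and the restricted (FUNC), contradicting the assumption that $\mathcal{A}$ is a traditional proof, and the lemma follows.

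The main obstacle I anticipate is the well-definedness bookkeeping rather than any deep step: one has to be careful that when an operator sits in two distinct contexts with (possibly) different product eigenbases, the value $v(A)$ — and more importantly the values on its spectral projections — agree. This is guaranteed because $c_n$ is a genuine function on the set of product rays (not on ray-in-a-context pairs), and any two product eigenbases of the same non-degenerate operator coincide as sets of rays; for degenerate operators the additive extension to spanning product projections inherits well-definedness from the fact, used throughout Sec.~\ref{sec:model}, that $c_n$ restricted to any product basis of a subspace sums consistently. I would also remark, as in the footnote to the definition of a traditional proof, that some bookkeeping is needed for sets like the bare Peres–Mermin nine operators, but this does not affect the argument since we only use (FUNC) for $g(A)\in\mathcal{A}$.
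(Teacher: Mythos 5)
Your proposal is correct and follows essentially the same route as the paper: take the KS-colouring of product rays guaranteed by Theorem~\ref{thm:prod} (via Lemma~\ref{lem:north}), extend it additively to the spectral projections, define $v(A)=\sum_j \lambda_j c(\Pi_j)$, and check (SPEC) and the restricted (FUNC) exactly as you do. Your extra care about well-definedness across contexts and degenerate eigenspaces is sound and only elaborates what the paper leaves implicit.
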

	\begin{proof}
By Theorem~\ref{thm:main}, there exists a KS-colouring, $c$, on the hypergraph $H$ generated by the simultaneous eigenbases of the contexts in $\mathcal{A}$. As in the treatment of Gleason's theorem in Sec.~\ref{sec:prelims}, this colouring can be equivalently defined on the rank-one projections corresponding to each ray and extended to higher rank projections via $c(\Pi_1+\Pi_2+\cdots)\equiv c(\Pi_1)+c(\Pi_2)+\cdots$  for mutually orthogonal sets of projections $\{\Pi_1,\Pi_2,\ldots\}$. Given $A\in\mathcal{A}$, let $A=\sum_{j}\lambda_j\Pi_j$ be its spectral decomposition, where $\Pi_j$ are orthogonal projections and $\lambda_j$ are the eigenvalues of $A$. We will show the map 
	\begin{equation}\label{eq:val}
			v(A)=\sum_{j}\lambda_jc\left(\Pi_j\right),
	\end{equation} satisfies SPEC and FUNC for all $A\in\mathcal{A}$.
		
For a given operator $A\in\mathcal{A}$, the colouring $c$ takes value one on exactly one of the projections $\Pi_j$ and zero on the rest. It follows that $v$ satisfies the (SPEC) principle. 
For any Borel function $g$ such that $g(A)\in\mathcal{A}$, we have
		\begin{align}
			v\left(g(A)\right)&=v\left(g\left(\sum_{j}\lambda_j\Pi_j\right)\right)\\
			&=v\left(\sum_{j}g(\lambda_j)\Pi_j\right)\\
			&=\sum_{j}g(\lambda_j)c(\Pi_j)=g(\lambda_k),
		\end{align}
		for some $\lambda_k\in\sigma(A)$. And further
		\begin{align}
			g(v(A))=g\left(\sum_{j=1}^d\lambda_jc(\Pi_j)\right)=g(\lambda_k).
		\end{align}
		Therefore, $v$ also satisfies the (FUNC) principle.
	\end{proof}
\section{Proof of Theorem \ref{thm:bellks}}\label{proof:bellks}

\bellks*

\begin{proof}
	Let 
	\begin{equation}
		\mathcal{S}=\left\{\ket{\psi_j}=\ket{\psi_j^1}\ket{\psi_j^2}\ldots\ket{\psi_j^n}|1\leq j\leq m\right\},
	\end{equation}
	be a set of $m$ product rays in $\C^{2^{\otimes n}}$, and let $\rho$ be a density operator on $\C^{2^{\otimes n}}$ that yields a statistical proof of the KS theorem with the hypergraph, $H$, generated by $\cS$ and all the bases contained in $\cS$. Explicitly, the hypergraph $H$ has $m$ vertices, $v_j$ for $1\leq j\leq m$, such that $\{v_j|j\in J\}\in E(H)$ if and only if $\{\ket{\psi_j}|j\in J\}$ is an orthonormal basis, where $J\subset\{1,\ldots,m\}$ is some indexing set. The non-classical quantum model on $H$ is given by $p^\cS_\rho(v_j)=\langle\psi_j|\rho|\psi_j\rangle$. We will show that $\rho$ necessarily violates a Bell inequality.
	
	Consider the set $\mathcal{S}_r$ of distinct local rays of the $r$-th system that are pairwise non-orthogonal. Explicitly, 
	\begin{equation}
		\mathcal{S}_r=\left\{\ket{\psi_j^r}|j\in J_r\right\},
	\end{equation} 
	where $j\in J_r\subseteq\left\{1,\ldots,m\right\}$ if and only if $\ket{\psi_j^r}\neq\ket{\psi_k^r}$ and $\braket{\psi_j^r|\psi_k^r}\neq 0$ for all $k<j$. Now we will relabel the states in $\mathcal{S}_r$ as follows 
	$$\mathcal{S}_r^{\rm re}=\left\{\ket{0^r_k}|1\leq k\leq \abs{\mathcal{S}_r}\right\}.$$
	
	Let $\mathcal{B}\left(\mathcal{S}\right)$ be the $n$ party Bell scenario in which the $r$-th party has $\abs{\mathcal{S}_r}$ binary measurement settings. Consider the quantum strategy in this scenario in which the $r$-th party's $k$-th measurement is given by 
	\begin{equation}\label{eq:rthmmts}
		x_r=k:\quad\left\{\ket{0^r_k}, \ket{1^r_k}\right\},
	\end{equation}
	where $\ket{1^r_k}\in\C^2$ is the ray orthogonal to $\ket{0^r_k}$. For $a_r\in\{0,1\}$, we denote the $n$-qubit ray $\bigotimes_{r=1}^n\ket{(a_r)^r_{x_r}}$ by $\ket{(\A|\X)}$---in correspondence with its outcome in the Bell scenario. Note that the global measurements, $\{\ket{(\A|\X)}\}_\A$, in this Bell experiment contain all the rays in $\mathcal{S}$. Let $\cS'$ denote the extended set (compared to $\cS$) of rays $\{\ket{(\A|\X)}\}_{\A,\X}$.
	
	Denote by $H'$ the contextuality scenario corresponding to the Bell scenario $\mathcal{B}(\cS)$. Recall that each hyperedge of $H'$ is given by the outcomes of an adaptive measurement. Equivalently, under the assignment of the rays from $\cS'$ to each vertex, each hyperedge corresponds to a projective measurement measurement that can be performed via LOCC. In other words, the hypergraph $H'$ is generated by $\cS'$ and those orthonormal bases that can be implemented via LOCC rather than all possible orthonormal bases. It follows that, although the vertices of $H'$ are a superset of those of the original contextuality scenario, $H$, there could be hyperedges in $H$ which are not contained in $E(H')$ since they arise from ``non-local'' bases in $H$, such as the basis in Eq.~\eqref{eq:nlbasis}. Denote by $G$ the contextuality scenario generated by rays $\cS'$ and \emph{all} the bases, which gives $V(G)=V(H')$ and $E(G)\supseteq E(H')\cup E(H)$. 
	
	The entangled state $\rho$ combined with the assignment of the ray $\ket{(\A|\X)}$ to each vertex $\A|\X$ gives a quantum model $p_\rho$ on both the hypergraphs, $G$ and $H'$. Further, we have that $p_\rho$ is a non-classical model on $G$ since it is a non-classical model on a subset of $G$, namely, $H$\footnote{The $G$ can only add further constraints on the classical models achievable on $H$, so any non-classical model on $H$ will necessarily be non-classical on $G$. Or, to take the contrapositive, if a model is classical on $G$ then it must be classical on $H$ because every deterministic model on $G$ will also be a deterministic model on $H$.}. Now, we will show that the classical models on $G$ and $H'$ coincide exactly, meaning that $p_\rho$ is also a non-classical model on $H'$. Given the bijection between classical models on $H'$ and local behaviours in the Bell scenario $\mathcal{B}(\cS)$, it follows that $\rho$ violates a Bell inequality.
	
	Clearly, a classical model on $G$ is a classical model on $H'$ since there are fewer constraints on the classical models on $H'$ due to the reduced set of hyperedges. We now show the converse, using an argument from \cite[Theorem~2]{augusiak2012tight}. Given an edge $e=\{\left(\mathbf{a}_j|\mathbf{x}_j\right)|1\leq j\leq 2^n\}$ of $G$ we have that for any pair of distinct vertices,  $\mathbf{a}_j|\mathbf{x}_j$ and $\mathbf{a}_k|\mathbf{x}_k$, in the edge the corresponding pair of rays, $\ket{(\mathbf{a}_j|\mathbf{x}_j)}$ and $\ket{(\mathbf{a}_k|\mathbf{x}_k)}$, are orthogonal by the definition of $G$. Since these rays are n-qubit product rays, we find that for at least one of the single-qubit subsystems, the qubit ray in $\ket{\mathbf{a}_j|\mathbf{x}_j}$ must the orthogonal to qubit ray of the same subsystem of $\ket{\mathbf{a}_k|\mathbf{x}_k}$. By construction, for each qubit ray that occurs in a given subsystem the orthogonal ray only occurs as the other outcome of the same measurement setting, cf.~Eq.~\eqref{eq:rthmmts}. If this orthogonality occurs in the $r$-th subsystem, explicitly, we find $(x_r)^j=(x_r)^k$ and $(a_r)^j\neq(a_r)^k$. Thus, we find in the two events $\mathbf{a}_j|\mathbf{x}_j$ and $\mathbf{a}_k|\mathbf{x}_k$ of the Bell scenario, party $r$ has the same measurement setting $(x_r)^j=(x_r)^k$ but observes a different outcome, either $(a_r)^j$ or $(a_r)^k$, i.e.~the two events are locally orthogonal in the terminology of Ref.~\cite{FSA13}.
	
	It follows that for any local deterministic behaviour in the Bell scenario (i.e.~classical model on $H'$), at most one of the events $\mathbf{a}_j|\mathbf{x}_j$ or $\mathbf{a}_k|\mathbf{x}_k$ occurs with probability one, whilst the other must occur with probability zero. Since this relationship holds between any pair of events in the edge $e\in E(G)\supseteq E(H')$, we find that a local deterministic behaviour on $H'$ assigns one to at most one of these outcomes. Hence, for each $e\in E(G)\backslash E(H')$, we have the following Bell inequality for $\mathcal{B}(\cS)$ (or, equivalently, a KS-noncontextuality inequality on the contextuality scenario $H'$)
	\begin{equation}\label{eq:ncineq}
		\sum_{\A|\X\in e}p_L(\A|\X)\leq 1,
	\end{equation}
	where $p_L$ is a local behaviour (or, equivalently, a classical model on $H'$).
	 	
	Finally, we show that the Bell inequality of Eq.~\eqref{eq:ncineq} is saturated by all local behaviours in $\mathcal{B}(\cS)$. We do so by showing that an internal point of the polytope of local behaviours saturates the inequality and therefore the inequality is trivial, in the sense that it is exactly saturated by all models in the affine span of the local polytope. Observe that the uniform behaviour $p_U\left(v\right)=1/2^n$ for all $v\in V(H')$ is an internal point of the local polytope and saturates the inequality \eqref{eq:ncineq}.
	
	Any internal behaviour $p_I(\A|\X)$ of the local polytope may be expressed as a convex combination 
	\begin{equation}
		p_I(\A|\X)=\omega p_D(\A|\X)+(1-\omega)p_\delta(\A|\X),
	\end{equation}
	of any local deterministic behaviour $p_D(\A|\X)$ (a vertex of the local polytope) and some other behaviour on the boundary of the local polytope $p_\delta(\A|\X)$, where $0\leq \omega\leq 1$. Thus, we have 
	\begin{equation}
		\begin{aligned}
			&\sum_{j=1}^{2^n}p_I(\A^j|\X^j)
			\\&=\omega\left(\sum_{j=1}^{2^n} p_D(\A^j|\X^j)\right)+(1-\omega)\left(\sum_{j=1}^{2^n}p_\delta(\A^j|\X^j)\right)\\&=1,
		\end{aligned}
	\end{equation}
	and, therefore, by the inequality \eqref{eq:ncineq} we find 
	\begin{equation}
		\sum_{j=1}^{2^n}p_D\left(\mathbf{a}^j|\mathbf{x}^j\right)=1.
	\end{equation} 
	Note that any affine combination of local deterministic behaviours, thus any non-signalling behaviour \cite[Corollary 2]{affineNS}, also obtains the value one for this Bell expression. In particular, we have
	\begin{equation}\label{eq:saturates}
		\sum_{j=1}^{2^n}p_L\left(\mathbf{a}^j|\mathbf{x}^j\right)=1,
	\end{equation}
	for any local behaviour $p_L$ in $\mathcal{B}(\cS)$.
	
	We have shown that any local behaviour in $\mathcal{B}(\cS)$, and thus classical model, $p_C$, on $H'$, satisfies $\sum_{v\in e}p_C(v)=1$ for all hyperedges $e\in E(G)$. Thus, the classical models on $H'$ are exactly the classical models on $G$. It follows that $p_\rho$ is a non-classical quantum model on $H'$ and therefore violates a Bell inequality.
\end{proof}
\section{Proof of Theorem~\ref{unentd3}}\label{thmpf}
\unentd*

\begin{proof}
	Let $\cH=\cH_1\otimes\ldots\otimes\cH_{n-1}$ be some separable Hilbert space for which there exists a KS set of bases $V^k$ for $1\leq k\leq K$ consisting entirely of product vectors $v_l$ for $1\leq l\leq L$ (including the case $n=2$ where we consider all vectors to be product). Let $W=\{w_j|1\leq j\leq J\}$ be a basis of a separable Hilbert space $\cH_n$ and consider the product bases $W^k=\{w_j\otimes v_l|w_j\in W, v_l\in V^k\}$ of $\cH_n\otimes\cH$ for each $1\leq k\leq K$. Now assume there is no unentangled KS theorem in $\cH_n\otimes\cH$ and, therefore, there exists a KS-colouring $c$ of the bases $W^k$. Consider the map $c'(v_l)=\sum_jc(w_j\otimes v_l)$ on the elements of the bases $V^k$. We will show this map is a KS-colouring of the bases $V^k$ and thus the assumption that there is no unentangled KS theorem in $\cH_n\otimes\cH$ must be false.
	
	Firstly, we have that $c$ assigns one to at most one of the vectors in $\{w_j\otimes v_l|1\leq j\leq J\}$ and assigns zero to the rest, since $c$ is a KS-colouring and the vectors are mutually orthogonal. Therefore $c'(v_l)\in\{0,1\}$ for all $1\leq l\leq L$. Secondly, if $\langle v_l,v_{l'}\rangle=0$ then the vectors $\{w_j\otimes v_l, w_j\otimes v_{l'}|1\leq j\leq J\}$ are mutually orthogonal and $c$ assigns one to at most one of the vectors. It follows that $c'(v_l)+c'(v_{l'})\leq 1$.  Finally, for all $W^k$ we have $c(w\otimes v)=1$ for some $w\in W$ and $v\in V^k$. Therefore, $c'(v)=\sum_jc(w_j\otimes v)=1$ for some $v\in V^k$ for all $1\leq k\leq K$.
	
	Since there exists a KS set in any separable Hilbert space of dimension at least three, the desired result follows by induction (and the irrelevance of the order of the Hilbert spaces in the tensor product). 
\end{proof}
\end{appendix}

\bibliographystyle{unsrturl}

\end{document}